\newcommand{\qed}{}
\begin{document}

\setcounter{page}{363}
\publyear{2021}
\papernumber{2078}
\volume{182}
\issue{4}

  \finalVersionForARXIV

\title{Perpetual Free-choice Petri Nets are Lucent \\ Proof of a Theorem of van der Aalst Using $CP$-exhaustions}

\author{Joachim Wehler\thanks{Address  for correspondence:$\,$Department of
                          Mathematics,$\,$Ludwig-Maximilians-Universit{\"a}t$\,$M{\"u}nchen,$\,$\mbox{Theresienstrasse$\,$39}, D-80333 M\"unchen, Germany.\newline \newline
          \vspace*{-6mm}{\scriptsize{Received  September 2020; \ revised September 2021.}}}
 \\
Department of Mathematics \\
Ludwig-Maximilians-Universit{\"a}t M{\"u}nchen (LMU Munich)\\
M\"unchen, Germany\\
Joachim.Wehler@gmx.net}

\maketitle

\runninghead{J. Wehler}{Perpetual Free-choice Petri Nets are Lucent}

\setcounter{footnote}{0}

\begin{abstract}
Van der Aalst's theorem is an important result for the \mbox{analysis} and synthesis of process models. The paper proves the theorem by exhausting perpetual free-choice Petri nets \mbox{by $CP$-subnets}. The \mbox{resulting $T$-systems} are investigated by elementary methods.
\end{abstract}

\begin{keywords}
Free-choice system, $CP$-subnet, perpetuality, lucency.
\end{keywords}


\section{Introduction} \label{sect_introduction}

In the course of his work on the analysis and synthesis of process models van der Aalst introduced the concept of a \emph{lucent} process model \cite{Aal2018},  \cite{Aal2019}. The global state of a lucent process model is known when all actions that are possible in a given state are known. When expressed in the language of Petri nets lucency means: If two reachable markings of the Petri net enable the same transitions (actions) then the markings are equal, i.e. each state of the system is already determined by the set of its enabled transitions.

How to decide by inspection of a Petri net whether it is lucent? Van der Aalst considers the class of live and bounded Petri nets. He provides interesting examples from this class which are not lucent, even though they are safe. He names \emph{perpetual} Petri nets the class of live and bounded Petri nets with a "regeneration point", i.e. with a home marking which marks only the places of a distinguished \mbox{cluster. \cite[Theor. 3]{Aal2018}} states:
\begin{flushleft}\textbf{Theorem} [Van der Aalst's theorem on lucency]\label{theor_vanaalsttheorem}
Each perpetual free-choice system is lucent.
\end{flushleft}\bigskip
For the importance of lucency in the context of process discovery see \cite{Aal2019}. The proof of the theorem in \cite{Aal2018} has a gap as van der Aalst remarks \mbox{in \cite{Aal}}. To close the gap van der Aalst uploaded a revised version of a previous paper, see \cite[Theor. 3]{Aal2020}.
\footnote{Added in proof: See also ''van der Aalst, Wil M.P.: Free-Choice Nets With Home Clusters Are Lucent. \emph{Fundamenta Informaticae}, 2021 (in print).''}\bigskip

The purpose of the present paper is to give a proof of van der Aalst's theorem which uses some fundamental results from the theory of free-choice systems. Notably we focus on the existence \mbox{of $CP$-subnets} of well-formed free-choice nets. Hence our proof uses different ideas than those in \cite{Aal2018}, \cite{Aal2020}: We exhaust a well-formed free-choice net by a family of $CP$-subnets. In the end the problem reduces to a statement about lucency of certain perpetual $T$-systems. Here the claim can be proved by elementary methods. For a scheme of the proof in the present paper see Figure \ref{fig_planofproof} in Section \ref{sect_enablingequaivalencemarkingequality}.

\section{Basic concepts and results}\label{sect_basicconcepts}

To fix the notation and for the convenience of the reader we recall some basic concepts and results. As common in mathematics we denote set inclusion by ``$\subset$''. The sign covers both cases, proper inclusion and equality; we de not use the sign ``$\subseteq$''. The symbol "$\subset$" also the denotes the inclusion of subnets. Furthermore, we mostly follow the standard textbook about free-choice systems \cite{DE1995}.
\begin{remark}[Concepts, notations, basic results]\label{rem_concepts} A \emph{net}
$$N=(P,T,F)$$
is a bipartite, directed graph with nodes the set $P$ of \emph{places} and the set $T$ of \emph{transitions}, and the set of \emph{edges}
$$F \subset \left((P \times{} T) \cup (T \times{} P) \right).$$
All nets are finite. We use also the notation $N_P:=P,\ N_T:=T$. We represent edges by arrows, pointing from the first to the second component of the pair.

\begin{enumerate}
\item \emph{Structure}: Two nets $N_j=(P_j,T_j,F_j),\ j=1,2,$ are \emph{disjoint} if
$$P_1\cap P_2=T_1 \cap T_2 =\emptyset.$$
The \emph{disjoint union} of a family of pairwise disjoint nets is the union of these nets. A~net  \mbox{$N^\prime=(P^\prime, T^\prime, F^\prime)$} is a \emph{subnet} $N^\prime \subset N$ if
$$P^\prime \subset P, T^\prime \subset T, F^\prime \subset F$$
The subnet $N^\prime \subset N$ is a \emph{full subnet} if
$$F^\prime=F \cap \left((P^\prime \times{} T^\prime) \cup (T^\prime \times{} P^\prime) \right)$$
If not explicitly stated otherwise the term \emph{subnet} in the present paper means a full subnet. But we will also consider subnets which are not full subnets. Each pair of
\mbox{subsets $P^\prime \subset P,\ T^\prime \subset T$} generates a full subnet
$$span_N<P^\prime,T^\prime>\ \subset N$$
with node set $P^\prime \cup T^\prime$. For a full subnet $N^\prime=(P^\prime, T^\prime, F^\prime) \subset N$ the \emph{complement}
$$\overline{N^\prime} :=N\setminus N^\prime \subset N$$
is the full subnet \mbox{of $N$} spanned by the nodes from $(P \cup T) \setminus (P^\prime \cup T^\prime)$. A \mbox{subnet $N^\prime \subset N$} is \emph{transition-bordered} if its places $p \in (N^\prime)_P$ satisfy $^\bullet p \cup p ^\bullet \subset N^\prime$.\bigskip

A \emph{path} of $N$ is a non-empty sequence of nodes of $N$
$$\delta=(x_1,...,x_n) \text{ with } (x_i,x_{i+1}) \in F,\ i=1,...,n-1.$$
We use the notation $\delta \subset N$. The path $\delta$ is \emph{elementary} if $x_i\neq x_j$ for $1\leq i \neq j \leq n$. It is a \emph{circuit} if $(x_n,x_1) \in F$. The \emph{concatenation} of two adjacent paths $\delta_1=(x_	1....,x_n)$ and $\delta_2=(x_n,...,x_{n+k})$ is
$$\delta_1* \delta_2:=(x_1,...,x_n,x_{n+1},...,x_{n+k})$$
If $x_1=x_{n+k}$ then $\delta_1* \delta_2$ induces the circuit $(x_1,...,x_{n+k-1})$.\bigskip

The net $N$ is \emph{weakly connected} or just \emph{connected} when each two nodes $x, y \in N$ satisfy
$$(x,y) \in (F\cup F^{-1})^*\ \text{(symmetric, reflexive and transitive closure)}$$
The net is \emph{strongly connected} when each two nodes $x, y \in N$ can be joined by a \mbox{path $(x,...,y)$} leading \mbox{from $x$ to $y$}, i.e. $(x,y)\in F^*$. If not stated otherwise nets are supposed to be connected.
\noindent
For a node $x$ the sets of nodes
$$^\bullet x:=\{y \in N:\ (y,x)\in F\} \text{ and } x^\bullet :=\{y \in N:\ (x,y)\in F\}$$
denote respectively the \emph{pre-set} and the \emph{post-set} of $x$. The concept generalizes to the pre-set and post-set of sets of nodes. The net $N$ is a
\emph{$T$-net} if all places $p\in P$ satisfy $card\ p^\bullet = card\ ^\bullet p=1$. The net is a \emph{$P$-net} if all transitions $t\in T$ \mbox{satisfy $card\ t^\bullet = card\ ^\bullet t=1$}. The net $N$ is a \emph{free-choice net} if for each \mbox{pair $(p,t) \in P \times{}T$}
$$(p,t)\in F \implies ^\bullet t \times{} p^\bullet \subset F$$
The \emph{cluster} of a node $x$ is the smallest subnet $cl \subset N$ which contains $x$ and for each place $p \in cl$ also its post-set $p^\bullet$ and for each transition $t \in cl$ also its pre-set $^\bullet t$. For a free-choice net $N$ and a cluster $cl \subset N$ holds: Each pair $(p,t)\in cl_P\times{} cl_T$ \mbox{satisfies $(x,y)\in F$}.\bigskip

A \emph{$P$-component} of $N$ is a non-empty, strongly connected $P$-subnet $C \subset N$ such that for each place $p\in C$ holds $^\bullet p \cup p^\bullet \subset C$.
\noindent
\item \emph{Structure and dynamics}: A \emph{marking} of $N$ is a map
$$\mu:N_P \xrightarrow{} \mathbb{N}$$
The \emph{token count} at $\mu$ of a subset $X$ of nodes of $N$ is the number
$$\Vert \mu \Vert_X:=\sum_{p \in X\cap N_P} \mu(p)$$
adding up all tokens marking places of $X$. A \emph{Petri net} or \emph{marked net} is a pair $(N,\mu)$ with $\mu$ a marking of $N$. \mbox{A \emph{$T$-system}} respectively a \emph{free-choice system} is a Petri \mbox{net $(N,\mu)$} with $N$ a $T$-net respectively a free-choice net.

\medskip\noindent
A transition $\,t \in T\,$ is $\,$\emph{enabled}$\,$ at the marking $\,\mu\,$ if all$\,$ its $\,$pre-places$\,$ are$\,$ marked, i.e. if for all  $p\in\ ^\bullet t$ holds $\mu(p) \geq 1$. A transition $t$, which is enabled at $\mu$, may fire. \emph{Firing} $t$ consumes one token from each pre-place of $t$ and creates one token at each post-place of $t$. The notation
$$\mu \xrightarrow{t} \mu_{post}$$
means: $t$ is enabled at $\mu$, and firing $t$ at $\mu$ creates the marking $\mu_{post}$ defined for each $p \in N_P$ as
$$\mu_{post}(p):=\begin{cases}
\mu(p)	 & \text{ if } p \in\ ^\bullet t \cap t^\bullet \text{ or } p \not\in\ ^\bullet t \cup t^\bullet \\
\mu(p)-1 & \text{ if } p \in\ ^\bullet t \setminus t^\bullet \\
\mu(p)+1& \text{ if } p \in\ t^\bullet \setminus ^\bullet t \\
\end{cases}$$
The set of all transitions of $N$ enabled at the marking $\mu$ is denoted $en(N,\mu)$.\bigskip

An \emph{occurrence sequence} $\sigma=(t_1,...,t_n)$ is a sequence of transitions. It is \emph{enabled} at a \mbox{marking $\mu$} if
$$\mu \xrightarrow{t_1} \mu_1 \xrightarrow{t_2} ... \mu_{n-1} \xrightarrow{t_n} \mu_n$$
The shorthand
$$\mu \xrightarrow{\sigma} \mu_n$$
expresses the successive enabledness and firing of the component transitions of $\sigma$.\medskip

A marking $\mu_{post}$ of a net $\,N\,$ is \emph{reachable from a marking} $\,\mu_{pre}\,$ if there exists an occurrence sequence
$$\mu_{pre} \xrightarrow{\sigma} \mu_{post}$$
A marking $\,\mu\,$ is reachable in a Petri net $\,(N,\mu_0)\,$ if $\,\mu\,$ is reachable from $\mu_0$. A marking which
is~reachable from each reachable marking is a \emph{home marking}. A Petri net $(N,\mu_0)$ is \emph{live} if for
each transition $t$ and from each reachable mark\-ing a marking is reachable which enables $t$. A Petri$\,$ net $\,(N,\mu_0)\,$
is $\,$\emph{bounded}$\,$ if there$\,$ exists a constant $\,K\,$ with $\,\mu(p) \leq K\,$ for$\,$ each$\,$ reachable  marking $\mu$
and for all places {$p\in N_P$}.\! The Petri net is \emph{safe} if the bound $K\!=\!1$ is possible.
A~net~{$N$} is \emph{well-formed} if there exists a marking $\mu_0$ of $N$ such that the Petri net $(N,\mu_0)$
is live and bounded. Two reachable markings $\mu_j,\ j=1,2,$ of a live and bounded Petri net with
$\mu_1 \geq \mu_2$ are equal. We will often use the latter result without further mentioning.\medskip

Each well-formed free-choice net is covered by $P$-components. The token count of  a $P$-compon\-ent  is the same for all reachable markings of $(N,\mu_0)$. Each strongly connected $P$-subnet of a well-formed free-choice net, in particular each elementary circuit, is contained in a $P$-component, \cite[Chap. 5]{TV1984}.

\item \emph{Greedy cluster}: For a \mbox{cluster $cl \subset N$} we denote by $\mu_{cl}$ the marking of $N$
$$\mu_{cl}:N_P \xrightarrow{} \mathbb{N},\ \mu_{cl}(p):=\begin{cases}
1 & p \in cl_P \\
0 & \text{otherwise}
\end{cases}$$
Hence $\mu_{cl}$ is the characteristic function of the set of places of $cl$. If a Petri \mbox{net $(N,\mu_0)$} has a reachable marking with $\mu=\mu_{cl}$ for a given \mbox{cluster $cl \subset N$}, then $cl$ shows a kind of ``greediness'' - in particular if $\mu$ is a home marking. Greediness will be a fundamental property in this paper.
\end{enumerate}
\end{remark}

\begin{flushleft}
We emphasize the following properties of $T$-systems:
\end{flushleft}
\begin{remark}[$T$-nets and $T$-systems]\label{rem_enablingtsystem}
\begin{itemize}
\item The $P$-components of a $T$-net are its elementary circuits.

\item In a $T$-system an enabled transition can lose its enabledness only by firing itself. During the firing of an occurrence sequence the token count of a path
$(p_{in},...,p_{out})$ joining two places can change only by creating tokens at $p_{in}$ or consuming tokens at $p_{out}$. The token count of a circuit is the same for each reachable marking.
\end{itemize}
\end{remark}

\section{$CP$-subnets and $CP$-exhaustion}\label{sect_cpexhaustion}

$CP$-subnets have been introduced by Desel and Esparza. Since then, $CP$-subnets are a standard tool for the investigation of free-choice systems, \cite[Def. 7.7, Theor. 7.13]{DE1995}.
\begin{definition}[$CP$-subnet]\label{def_cpsubnet}
Consider a net $N$.
\begin{enumerate}
\item A non-empty, weakly connected transition-bordered $T$-subnet
$$\hat{N} \subset N$$
is a \emph{$CP$-subnet} of $N$ if the complement
$$\overline{N}:=N \setminus \hat{N}$$
contains some transition and is strongly connected.
\item A \emph{way-in transition} $t_{in}$ of a $CP$-subnet $\hat{N} \subset N$ is a transition $t \in \hat{N}_T$ with
        $^\bullet t \cap \overline{N}_P \neq \emptyset$,
        a    \emph{way-out transition} $t_{out}$ is a transition $t \in \hat{N}_T$ with $t^\bullet \cap \overline{N}_P \neq \emptyset$.
   \eject
\item A $CP$-subnet $\hat{N} \subset N$ is \emph{adapted} to a cluster $cl \subset N$ if $cl \not\subset \hat{N}$.
\end{enumerate}
\end{definition}

\begin{flushleft}
We recall some well-known properties of $CP$-subnets.
\end{flushleft}
\begin{remark}[Existence, structure and dynamics of $CP$-subnets]\label{rem_existencestructuredynamics}
Consider a well-formed free-choice net $N$.
\begin{enumerate}
\item \emph{Existence}: If $N$ is not a $T$-net then $N$ has a $CP$-subnet $\hat{N} \subset N$, \mbox{\cite[Prop. 7.11]{DE1995}}. One may even assume that a given transition $t\in N$ is not contained \mbox{in $\hat{N}$}, \cite[Lem. 1.2]{Wehler2010}. The latter result is crucial to obtain in Theorem \ref{theor_existencecpexhaustion} a $CP$-exhaustion of $N$ which is adapted to a given regeneration cluster of $N$.

\item \emph{Structure}: A $CP$-subnet $\hat{N} \subset N$ has the following structural properties:

\begin{itemize}
\item The net $\hat{N}$ has a unique way-in transition $t_{in} \in \hat{N}_T$, \mbox{\cite[Prop. 7.10]{DE1995}}. The net $\hat{N}$ has at least one way-out transition $t_{out}$ because $N$ is strongly connected.\vspace*{0.8mm}

\item Each place $p \in \hat{N}$ has a path $(t_{in},...,p) \subset \hat{N}$ leading from $t_{in}$ to $p$, cf. \mbox{\cite[Prop. 7.10 proof]{DE1995}}.\vspace*{0.8mm}

\item The complement $\overline{N}:=N\setminus \hat{N}$ is a well-formed free-choice net too, \cite[Cor. 7.9]{DE1995}.\vspace*{0.8mm}

\item Each cluster $cl \subset N$ satisfies
$$cl \not\subset \hat{N} \iff (cl \cap \overline{N})_T \neq \emptyset \iff cl_P \subset \overline{N}_P$$
The proof uses that $\overline{N}$ is strongly connected and that $\hat{N} \subset N$ is transition-bordered.
\end{itemize}

\item \emph{Dynamics}: Consider a $CP$-subnet $\hat{N} \subset N$, and a live and bounded marking $\mu_0$ of $N$ and a reachable marking $\mu$ of $(N,\mu_0)$.

\begin{itemize}
\item There exists a \emph{shutdown sequence} for $\hat{N}$, i.e. a finite occurrence sequence \mbox{of $(N,\mu)$}
$$\mu \xrightarrow{\sigma_{sd}} \mu_{sd}$$
with transitions $t\in \sigma_{sd}$ only from $\hat{N}_T\setminus \{t_{in}\}$, such that $\mu_{sd}$ enables no transition of $\hat{N}$ different from $t_{in}$, \cite[Prop. 7.8]{DE1995}.

\item The free-choice system
$$(\overline{N},\overline{\mu}_{sd}) \text{ with } \overline{\mu}_{sd}:=\mu_{sd}|\overline{N}$$
is live and bounded, \cite[Prop. 7.8]{DE1995}. If $(N,\mu_0)$ is safe then $(\overline{N},\overline{\mu}_{sd})$ is safe too.

\item An occurrence sequence of transitions from $\overline{N}$ is enabled at a marking $\mu$ \mbox{of $N$} iff it is enabled as an occurrence sequence of  the restriction $(\overline{N},\mu|\overline{N})$.
\end{itemize}
\end{enumerate}
\end{remark}

\begin{flushleft}
Proposition \ref{prop_iterationcpsubnet} shows: A $CP$-subnet in the complement of a $CP$-subnet is \mbox{a $CP$-subnet} in the original net too. The result prepares the induction step in the proof of Theorem \ref{theor_existencecpexhaustion}.
\end{flushleft}
\begin{proposition}[Iteration of $CP$-subnets]\label{prop_iterationcpsubnet}
Consider a well-formed free-choice net $N$ and \mbox{a $CP$-subnet $\hat{N}_0 \subset N$} with complement
$$\overline{N}_0:=N \setminus \hat{N}_0.$$
Each $CP$-subnet $\hat{N}_1 \subset \overline{N}_0$ of the complement $\overline{N}_0$ is also a $CP$-subnet $\hat{N}_1 \subset N \text{ of } N$.
\end{proposition}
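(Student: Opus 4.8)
The plan is to check, one by one, the defining clauses of Definition~\ref{def_cpsubnet} for $\hat{N}_1$ regarded as a subnet of $N$. Three clauses are automatic and intrinsic to the net $\hat{N}_1$ itself: it is non-empty, weakly connected, and a $T$-net, none of which depends on the surrounding net; and since a full subnet of a full subnet is again a full subnet, $\hat{N}_1$ is a full subnet of $N$. What genuinely refers to $N$ are the remaining two clauses: that $\hat{N}_1$ is transition-bordered \emph{in $N$}, and that the complement $\overline{N}_1 := N \setminus \hat{N}_1$ contains a transition and is strongly connected. I expect the transition-bordered clause to carry all the difficulty, and the strong-connectedness clause to follow from it with little extra work.

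For the transition-bordered clause I would fix a place $p \in (\hat{N}_1)_P$ and compare its neighbourhood in $N$ with its neighbourhood in $\overline{N}_0$. As $\hat{N}_1$ is transition-bordered in $\overline{N}_0$, the sets ${}^\bullet p$ and $p^\bullet$ formed in $\overline{N}_0$ already lie in $\hat{N}_1$, so the only possible extra neighbours in $N$ are transitions of $\hat{N}_0$. Such a transition $t \in (\hat{N}_0)_T$ adjacent to $p \in (\overline{N}_0)_P$ must be the (unique, by Remark~\ref{rem_existencestructuredynamics}) way-in transition $t_{in}$ of $\hat{N}_0$ if $p \to t$, or a way-out transition if $t \to p$. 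Thus everything reduces to the claim that no place of $\hat{N}_1$ is a pre-place of $t_{in}$ and no place of $\hat{N}_1$ is a post-place of a way-out transition of $\hat{N}_0$. For the first (pre-place) half I would invoke the cluster characterisation of Remark~\ref{rem_existencestructuredynamics}, applied to the pair $\hat{N}_1 \subset \overline{N}_0$: it forces the whole cluster of $p$ computed in $\overline{N}_0$ to lie inside $\hat{N}_1$, and the work is to show this cluster is unchanged when recomputed in $N$, so that $p$ cannot sit in the straddling cluster of $t_{in}$. The second (post-place) half is dual and less direct, since pre-transitions of a place are not governed by the free-choice branching condition; here I would argue through the cluster of the way-out transition itself, whose pre-places lie in $\hat{N}_0$ by uniqueness of the way-in transition.

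Granting transition-borderedness, the complement $\overline{N}_1$ is, on nodes, the union of $\hat{N}_0$ and $\overline{N}_{01} := \overline{N}_0 \setminus \hat{N}_1$. It contains a transition, for instance $t_{in}$. For strong connectedness I would use that $\overline{N}_{01}$ is strongly connected, being the complement of the $CP$-subnet $\hat{N}_1$ in $\overline{N}_0$, and that $\hat{N}_0$ now attaches to it cleanly: transition-borderedness places every pre-place of $t_{in}$ and every post-place of a way-out transition inside $\overline{N}_{01}$. Hence every node of $\hat{N}_0$ is reachable from $\overline{N}_{01}$ (enter through a pre-place of $t_{in}$ and use that each node of $\hat{N}_0$ is reachable from $t_{in}$ within $\hat{N}_0$, Remark~\ref{rem_existencestructuredynamics}), and every node of $\hat{N}_0$ reaches $\overline{N}_{01}$ (a path leaving $\hat{N}_0$ can only exit through a way-out transition into $\overline{N}_{01}$, because $\hat{N}_0$ is transition-bordered). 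Routing any two nodes of $\overline{N}_1$ through the strongly connected $\overline{N}_{01}$ then yields strong connectedness.

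The main obstacle is precisely the transition-bordered step, and within it the claim that $\hat{N}_1$ swallows none of the boundary places of $\hat{N}_0$. The danger is that $\overline{N}_0$ \emph{forgets} the branchings created by the way-in and way-out transitions of $\hat{N}_0$: a pre-place of $t_{in}$ may appear as a perfectly ordinary, linearly embedded place of $\overline{N}_0$ and so be a candidate for membership in a $T$-subnet there. Excluding this is where the fine structure of $CP$-subnets --- that the cut between $\hat{N}_0$ and $\overline{N}_0$ runs along clusters --- must be used through Remark~\ref{rem_existencestructuredynamics}, rather than merely the fact that $\hat{N}_1$ is transition-bordered in $\overline{N}_0$.
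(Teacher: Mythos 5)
You have mapped the obligations correctly, and in one respect more carefully than the paper itself: the paper's proof verifies only the two complement clauses --- the transition in $N \setminus \hat{N}_1$ by the same inclusion you use, and strong connectedness by routing paths through $\hat{N}_0$ essentially as in your third paragraph --- and never addresses transition-borderedness of $\hat{N}_1$ in $N$ at all. Both arguments therefore stand or fall with exactly the boundary claim you isolate: that $\hat{N}_1$ swallows no pre-place of the way-in transition $t_{in}$ of $\hat{N}_0$ and no post-place of a way-out transition. The paper assumes this silently, writing $p_{in} \in {}^\bullet t_{in} \subset \overline{N}_1$ and $p_{out} \in (t_{out})^\bullet \subset \overline{N}_1$ without justification; you at least name it as the crux. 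But your sketch for discharging it does not work: proving that the cluster of $p$ is ``unchanged when recomputed in $N$'' is not a method but a restatement of the goal, since $cl_N(p)$ contains $t_{in}$ precisely when $p \in {}^\bullet t_{in}$; the cluster dichotomy of Remark~\ref{rem_existencestructuredynamics}, applied to $\hat{N}_1 \subset \overline{N}_0$, only forces $cl_{\overline{N}_0}(p)$ into $\hat{N}_1$ and produces no contradiction with a straddling $N$-cluster. The dual half for way-out transitions is in the same state.

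Worse, the step cannot be closed from Definition~\ref{def_cpsubnet} at all. Consider the strongly connected state machine $N$ with places $p,q,a,r$, transitions $s,t',t_{in},t_{out},u,v$, and edges $a \to s \to p$, $p \to t'$, $p \to t_{in} \to q \to t_{out} \to r$, $t' \to r \to u \to a \to v \to r$. A state machine is trivially free-choice, and with one token $N$ is live and safe, hence well-formed. The full subnet $\hat{N}_0$ on $\{q, t_{in}, t_{out}\}$ is a $CP$-subnet of $N$ with unique way-in transition $t_{in}$, and the full subnet $\hat{N}_1$ of $\overline{N}_0$ on $\{s, p, t'\}$ is a $CP$-subnet of $\overline{N}_0$: in $\overline{N}_0$ the place $p$ has the unique pre-transition $s$ and unique post-transition $t'$ (the edge $p \to t_{in}$ has disappeared), and the complement of $\hat{N}_1$ in $\overline{N}_0$ is the circuit on $\{r,u,a,v\}$. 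Yet $p \in {}^\bullet t_{in}$, so $\hat{N}_1$ is not transition-bordered in $N$, and $N \setminus \hat{N}_1$ contains no edge entering $t_{in}$, hence is not strongly connected: $\hat{N}_1$ is not a $CP$-subnet of $N$. So the boundary claim --- on which your proof and the paper's proof both rest --- requires an additional hypothesis beyond Definition~\ref{def_cpsubnet}, for instance a property of the particular $CP$-subnets delivered by the existence results cited in Remark~\ref{rem_existencestructuredynamics}. As it stands, your proposal has a genuine gap at precisely the point you predicted would carry all the difficulty, and it is a gap the paper's own argument shares rather than fills.
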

\begin{proof}
One has to show that the complement $N \setminus \hat{N}_1$ contains some transition and is strongly connected.
\begin{flushleft}
i) \emph{The complement contains some transition}: Because $\hat{N}_1 \subset \overline{N}_0$ is a $CP$-subnet, the \mbox{complement $\overline{N}_0 \setminus \hat{N}_1$}
contains some transition by definition. The inclusion
\end{flushleft}
$$(\overline{N}_0 \setminus \hat{N}_1) \subset (N \setminus \hat{N}_1)$$
implies that the complement $N \setminus \hat{N}_1$ contains a transition too.
\begin{flushleft}
ii) \emph{The complement is strongly connected}: The proof relies on Remark \ref{rem_existencestructuredynamics}, part 2). Set
$$\overline{N}_1:=\overline{N}_0\setminus \hat{N}_1.$$
By construction $\overline{N}_1$ is strongly connected and
$$N \setminus \hat{N}_1= span_N<\overline{N}_1,\hat{N}_0>$$
The claim that two nodes
$$x_1,\ x_2 \in N \setminus \hat{N}_1$$
can be joined in both directions by a path in $span_N<\overline{N}_1,\hat{N}_0>$ reduces to the following two cases:
\begin{itemize}
\item $x_1 \in \overline{N}_1 \text{ and } x_2 \in \hat{N}_0$: First, there exists a place
$$p_{in} \in\ ^\bullet t_{in} \subset \overline{N}_1$$
with $t_{in} \in\hat{N}_0$ the way-in transition of $\hat{N}_0$. Because
$p_{in} \in \overline{N}_1$ there exists a path
$$\gamma_1:=(x_1,...,p_{in}) \subset \overline{N}_1$$
Secondly, choose
$$\gamma_2:=(p_{in},t_{in}) \subset span_N<\overline{N}_1,\hat{N}_0>$$
the joining edge. Eventually, there exists a path
$$\gamma_3:=(t_{in},...,x_2) \subset \hat{N}_0.$$
The concatenation satisfies
$$\gamma:=\gamma_1 *\gamma_2 *\gamma_3=(x_1,...,x_2) \subset span_N<\overline{N}_1,\hat{N}_0>$$
For the opposite direction: First, there exists a path
$$\delta_1:=(x_2,...,t_{out}) \subset \hat{N}_0$$
with $t_{out} \in \hat{N}_0$ a suitable way-out transition of $\hat{N}_0$. Secondly, choose a place
$$p_{out} \in (t_{out})^\bullet \subset \overline{N}_1$$
and set
$$\delta_2:=(t_{out},p_{out})  \subset span_N<\overline{N}_1,\hat{N}_0>$$
the joining edge. Eventually, there exists a path
$$\delta_3:=(p_{out},...,x_1) \subset \overline{N}_1$$
The concatenation satisfies
$$\delta:=\delta_1*\delta_2*\delta_3=(x_2,...,x_1) \subset span_N<\overline{N}_1,\hat{N}_0>$$
\item Both $x_1, x_2 \in \hat{N}_0$: The case follows from the first case after introducing an intermediate \mbox{place $x_3 \in \overline{N}_1$}.\qed
\end{itemize}
\end{flushleft}

\vspace*{-5mm}
\end{proof}

\begin{flushleft}
The main means for our proof of van der Aalst's theorem is the new concept of \mbox{a $CP$-exhaustion}. Theorem \ref{theor_existencecpexhaustion} shows that any well-formed free-choice net has \mbox{a $CP$-exhaustion} adapted to a given cluster.
\end{flushleft}
\begin{definition}[Adapted $CP$-exhaustion]\label{def_cpexhaustion}
A \emph{$CP$-exhaustion} of a net $N$ is a family
$${(\hat{N}_i)}_{i \in I},\ I=\{0,...,n\} \subset \mathbb{N},$$
of pairwise disjoint $CP$-subnets $\hat{N}_i \subset N$ such that
$$\overline{N}:=N \setminus \dot\bigcup_{i\in I} \hat{N}_i$$
is a strongly connected $T$-net. The $CP$-exhaustion defines the disjoint union
$$N_{exh}:=\overline{N} \  \dot\cup \  \dot\bigcup_{i\in I} \hat{N}_i$$
The $CP$-exhaustion is \emph{adapted} to a given cluster $cl \subset N$ if $\hat{N}_i \subset N$ is $cl$-adapted for \mbox{all $i\in I$}.
\end{definition}

\begin{flushleft}
The net $N_{exh}$ has the same nodes as $N$ and
\end{flushleft}
$$span_N<N_{exh}>=N$$
For a non-empty index set $I$ the net $N_{exh}$ is not connected and $N_{exh} \subset N$ is not a full subnet.

\eject
\medskip\noindent
In general one does not obtain a $CP$-exhaustion by just taking a maximal family of pairwise disjoint $CP$-subnets of $N$:
Their complement $\overline{N}$ is not necessarily connected. Therefore we construct a $CP$-exhaustion of a well-formed free-choice net iteratively: The next $CP$-subnet is a $CP$-subnet of the \emph{complement} of the previous $CP$-subnet. This is a stronger property than just being a $CP$-subnet
of $N$. Proposition \ref{prop_iterationcpsubnet} ensures that all obtained $CP$-subnets are also $CP$-subnets of $N$.

\begin{theorem}[Existence of an adapted $CP$-exhaustion]\label{theor_existencecpexhaustion}
Consider a well-formed free-choice net $N$ and a cluster $cl \subset N$. Then $N$ has a $cl$-adapted $CP$-exhaustion.
\end{theorem}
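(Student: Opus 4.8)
The plan is to construct the exhaustion iteratively, peeling off one $CP$-subnet at a time from the complement of the previously removed ones, and to anchor the adaptedness condition to a single fixed transition of the cluster. First I would fix a transition $t_0 \in cl_T$; such a transition exists because $N$ is a well-formed free-choice net and hence strongly connected, so the places of $cl$ have non-empty post-set and the cluster contains a transition. The decisive observation is that $cl$-adaptedness of a $CP$-subnet $\hat{N}$ only demands $cl \not\subset \hat{N}$, and this holds as soon as \emph{one} node of $cl$ lies outside $\hat{N}$. It therefore suffices to keep $t_0$ out of every $CP$-subnet we remove, which is precisely what the strengthened existence statement permits.

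I would argue by induction on the number $card\ N_T$ of transitions. In the base case $N$ is already a $T$-net; since $N$ is well-formed it is strongly connected, so the empty family $I=\emptyset$ with $\overline{N}=N$ is a (vacuously $cl$-adapted) $CP$-exhaustion. In the inductive step $N$ is not a $T$-net, and by the \emph{Existence} part of Remark \ref{rem_existencestructuredynamics}, together with the strengthening \cite[Lem. 1.2]{Wehler2010} cited there, $N$ has a $CP$-subnet $\hat{N}_0 \subset N$ with $t_0 \notin \hat{N}_0$. Because $t_0 \in cl_T$ then lies in the complement, $cl \not\subset \hat{N}_0$, so $\hat{N}_0$ is $cl$-adapted. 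By the \emph{Structure} part of Remark \ref{rem_existencestructuredynamics} the complement $\overline{N}_0 := N \setminus \hat{N}_0$ is again a well-formed free-choice net; it is strongly connected by the very definition of a $CP$-subnet; and it still contains $t_0$.

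Now I would simply iterate: each subsequent $CP$-subnet $\hat{N}_{i+1}$ is extracted inside $\overline{N}_i := \overline{N}_{i-1} \setminus \hat{N}_i$, again avoiding $t_0$, which is possible by Remark \ref{rem_existencestructuredynamics} as long as $\overline{N}_i$ is not a $T$-net. Proposition \ref{prop_iterationcpsubnet}, applied repeatedly, shows that every $\hat{N}_i$ produced this way is a genuine $CP$-subnet of the original net $N$. The $\hat{N}_i$ are pairwise disjoint, since $\hat{N}_{i+1} \subset \overline{N}_i$ is disjoint from $\hat{N}_0,\dots,\hat{N}_i$; and each $\hat{N}_i$ is $cl$-adapted because $t_0 \notin \hat{N}_i$. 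Every extraction strictly decreases the transition count, as a $CP$-subnet is non-empty and contains at least its unique way-in transition, so the process terminates. It stops exactly when the current complement is a $T$-net, and that complement is strongly connected by construction, which delivers the required strongly connected $T$-net $\overline{N}$.

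The main obstacle is not a single hard computation but the bookkeeping that keeps the construction both feasible and adapted at every stage. The crucial device is the strengthened existence statement in part~1 of Remark \ref{rem_existencestructuredynamics}, which lets us forbid the \emph{same} transition $t_0$ at each step: without it one could not simultaneously guarantee that the removed $CP$-subnets stay $cl$-adapted and that $t_0$ survives into the next complement. The remaining care is to verify that the hypotheses of Remark \ref{rem_existencestructuredynamics} and of Proposition \ref{prop_iterationcpsubnet} persist along the iteration, namely that each $\overline{N}_i$ is a strongly connected well-formed free-choice net still containing $t_0$; these are exactly what the \emph{Structure} part of Remark \ref{rem_existencestructuredynamics} supplies at every step.
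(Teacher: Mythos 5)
Your proposal is correct and takes essentially the same route as the paper: iteratively extract a $CP$-subnet from the complement of the previously removed ones via the strengthened existence result \cite[Lem. 1.2]{Wehler2010} cited in Remark \ref{rem_existencestructuredynamics}, use Proposition \ref{prop_iterationcpsubnet} to certify that each extracted net is a $CP$-subnet of $N$ itself, propagate well-formedness of the complements by Remark \ref{rem_existencestructuredynamics}, and terminate by finiteness. The only, harmless, difference is bookkeeping: you keep a single fixed transition $t_0 \in cl_T$ outside every $\hat{N}_i$, whereas the paper tracks the shrinking non-empty clusters $cl_j := cl \cap \overline{N}_j$ --- two equivalent ways of witnessing $cl \not\subset \hat{N}_i$.
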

\begin{proof}
i) \emph{Algorithm constructing the exhaustion}: The following algorithm constructs the index \mbox{set $I \subset \mathbb{N}$} and the family ${(\hat{N}_i)}_{i \in I}$ of the $CP$-exhaustion by induction on $j \geq 0$:
\begin{flushleft}
Initialize the net $\overline{N}_{-1}:=N$ and the cluster $cl_{-1}:=cl \subset \overline{N}_{-1}$.
\end{flushleft}
\begin{flushleft}
Step $\mathfrak{A}(j)$ constructs a triple $(\hat{N}_j,\ \overline{N}_j,\ cl_j)$ with the following properties
\end{flushleft}
\begin{itemize}
\item The first component $\hat{N}_{j}$ is a $cl_{j-1}$-adapted $CP$-subnet $\hat{N}_{j} \subset \overline{N}_{j-1}$.
\item Second component: The complement $\overline{N}_{j}:=\overline{N}_{j-1} \setminus \hat{N}_j$ is well-formed.
\item Third component: The cluster $cl_j:=cl_{j-1} \cap \overline{N}_j$ is not empty .
\end{itemize}
\begin{flushleft}
Induction start $\mathfrak{A}(0)$: If $N$ is a $T$-net then set
\end{flushleft}
$$I:=\emptyset,\ \overline{N}:=N$$
and terminate. Otherwise Remark \ref{rem_existencestructuredynamics} provides a $cl$-adapted $CP$-subnet
$$\hat{N}_0 \subset N$$
Define
$$\overline{N}_0:=N \setminus \hat{N}_0 \text{ and } cl_{0}:=cl \cap \overline{N}_0 \neq \emptyset$$
Remark \ref{rem_existencestructuredynamics}, part 2) shows that $\overline{N}_0$ is well-formed.
\begin{flushleft}
Induction step $j \mapsto j+1$: By induction assumption $\mathfrak{A}(j)$ the free-choice net $\overline{N}_{j}$ is well-formed. If $\overline{N}_{j}$ is a $T$-net, then set
$$I:=\{0,...,j-1\} \text{ and } \overline{N}:=\overline{N}_j$$
and terminate. Otherwise Remark \ref{rem_existencestructuredynamics} provides a $cl_j$-adapted $CP$-subnet
$$\hat{N}_{j+1} \subset \overline{N}_j$$
Define
$$\overline{N}_{j+1}:=\overline{N}_j\setminus \hat{N}_{j+1} \text{ and } cl_{j+1}:=cl_j \cap \overline{N}_{j+1} \neq \emptyset$$
\end{flushleft}

\noindent
ii) \emph{Correctness}: The algorithm terminates because $N$ is finite. If $N$ is not a $T$-net then the iterative application of Proposition \ref{prop_iterationcpsubnet} implies that for each $j\in I$ the $CP$-subnet
$$\hat{N}_{j} \subset \overline{N}_{j-1}$$
is also a $CP$-subnet $\hat{N}_{j} \subset N$. Also $cl  \not\subset \hat{N}_j$ because by construction
$$cl \cap \overline{N}_{j}=cl_{j-1} \cap \overline{N}_{j} \neq \emptyset$$

\vspace*{-9mm}
\end{proof}

\begin{example}[Adapted $CP$-exhaustion]\label{exam_cpexhaustion}
The example applies the $CP$-algorithm from \mbox{Theorem \ref{theor_existencecpexhaustion}} to the free-choice net $N$ underlying the Petri net $(N,\mu_0)$ from Figure \ref{fig_figure5}. The net is taken from \cite{Aal2018}. It is well-formed because $\mu_0$ is live and safe. We construct by iteration \mbox{a $CP$-exhaustion} \mbox{of $N$} adapted to the cluster
$$cl:=span_N<start,\ t_0>.$$
\begin{figure}[!h]
\vspace*{-3mm}
\centering
\includegraphics[scale=0.4]{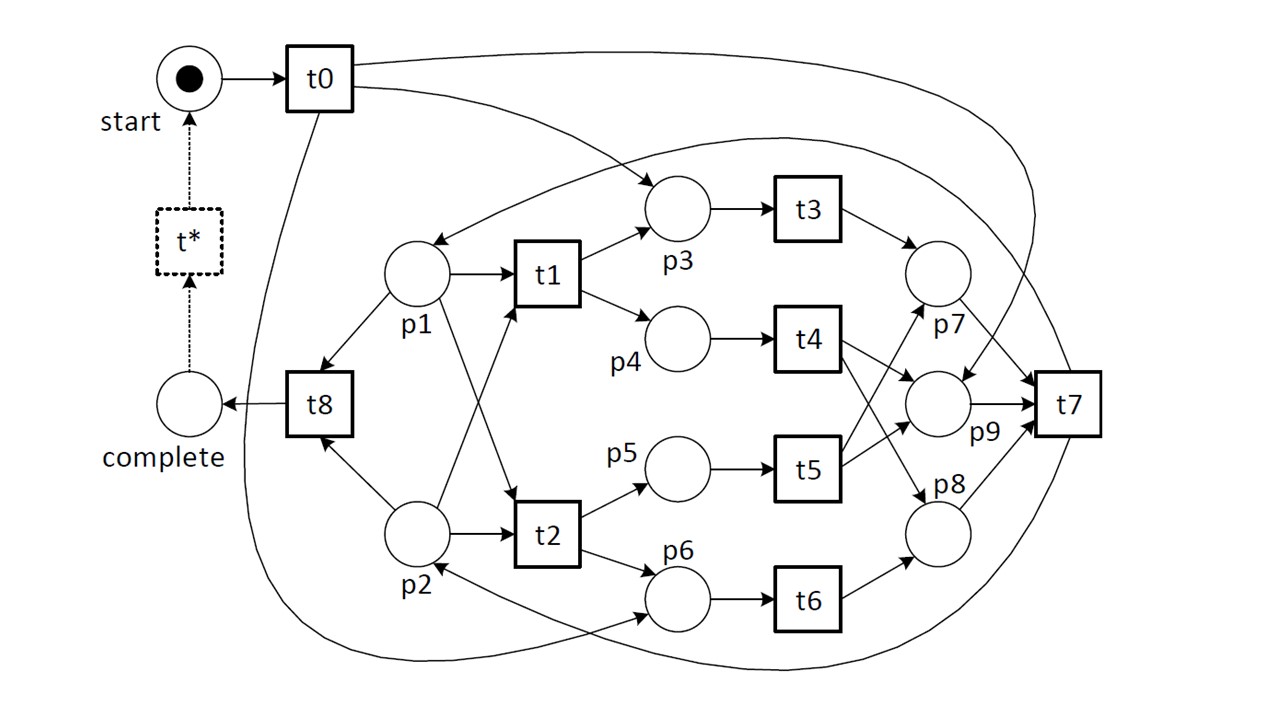}\vspace*{-3mm}
\caption{Free-choice system $(N,\mu_0)$ from \cite[Fig. 5]{Aal2018} (dashing at $t^{*}$ here not significant)}
\label{fig_figure5}\vspace*{-1mm}
\end{figure}
\begin{enumerate}
\item \emph{Constructing an adapted $CP$-exhaustion}:
\begin{itemize}
\item First, choose the $CP$-subnet of $N$
$$\hat{N}_0:=span_N<p_4,t_1,t_4>\ \subset N$$
The complement
$$\overline{N}_0:=N \setminus \hat{N}_0$$
is well-formed.
\eject
\item Secondly, choose the $CP$-subnet of the complement $\overline{N}_0$
$$\hat{N}_1:= span_{\overline{N}_0}<p_5,t_2,t_5>\ \subset \overline{N}_0$$
\item The final complement is the strongly connected $T$-net
$$\overline{N}:=\overline{N}_0 \setminus \hat{N}_1$$
\end{itemize}

\noindent Figure \ref{fig_cpexhaustion} shows the subnets $\hat{N}_0,\ \hat{N}_1,\ \overline{N} \subset N$. The family $(\hat{N}_0,\hat{N}_1)$ is a $cl$-adapted $CP$-exhaustion of $N$, and
$$N_{exh}:=\overline{N}\ \dot \cup \ \hat{N}_0\ \dot\cup \ \hat{N}_1$$
satisfies
$$span_N<N_{exh}>=N.$$

\noindent Note: The $CP$-exhaustion is also adapted to the cluster
$$cl_1=span_N<p_1,p_2,t_1,t_2,t_8>.$$
\item \emph{Greediness of the clusters}: Both clusters $cl$ and $cl_1$ of $N$ are greedy in the Petri \mbox{net $(N,\mu_0)$}. They provide examples of \emph{regeneration clusters}, a fundamental concept which will be introduced in Definition \ref{def_perpetual}.
\end{enumerate}
\end{example}

\begin{figure}[H]
\centering
\includegraphics[scale=0.42]{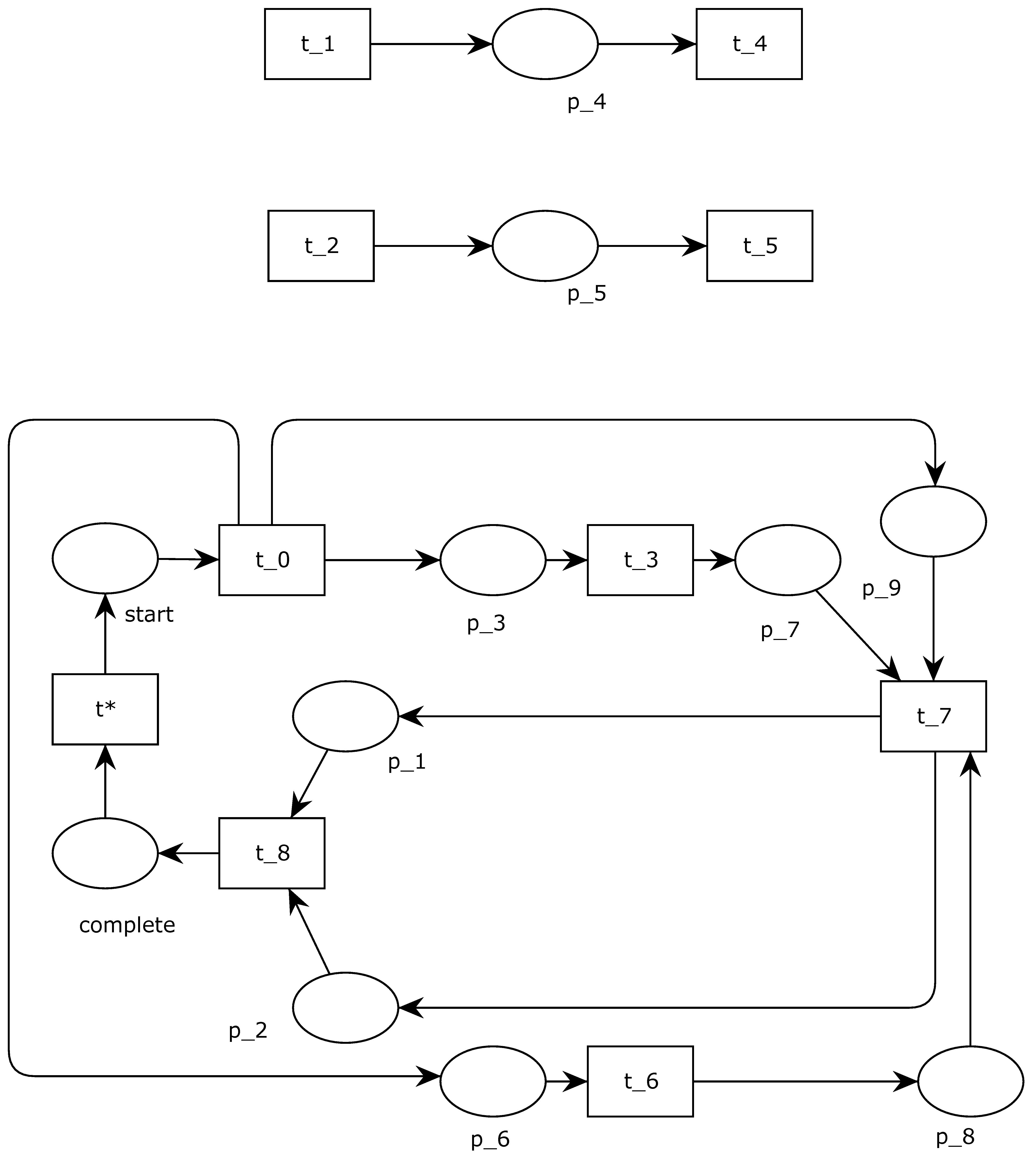}
\caption{$CP$-exhaustion of the net $N$ from Figure \ref{fig_figure5}. Top: $\hat{N}_0,\hat{N}_1$; bottom: $\overline{N}$}
\label{fig_cpexhaustion}
\end{figure}

\begin{definition}[Way-in places and critical transitions]\label{def_criticalboundary}
Consider a $CP$-exhaustion ${(\hat{N}_i)}_{i\in I}$ of a net $N$. Set
$$\hat{N}:= \bigcup_{i\in I}\ \hat{N}_i \text{ and  } \overline{N}:=N \setminus \hat{N}$$
The elements from
$$\overline{N}_{P,in}:=\overline{N}_P \cap (\hat{N}_T)^\bullet \subset \overline{N}_P$$
are the \emph{way-in places} of $\overline{N} \subset N$. Their post-transitions, the elements from
$$\overline{N}_{T,in}:= (\overline{N}_{P,in}) ^\bullet \subset \overline{N}_T$$
are the \emph{critical transitions} of $\overline{N}$. Here the post-place operator in $(\hat{N}_T)^\bullet$ applies with respect to the edges \mbox{of $N$}, while the post-transition operator in $(\overline{N}_{P,in}) ^\bullet$ applies with respect to the edges of $\overline{N}$.
\end{definition}
\begin{flushleft}
For an application of the concepts of Definition \ref{def_criticalboundary} see Figure \ref{fig_enablingcriticaltransitions} with two way-in \mbox{places $p_j$} and two critical transitions $t_j,\ j=1,2$.
\end{flushleft}

\section{Enabling equivalence and marking equality in free-choice systems}\label{sect_enablingequaivalencemarkingequality}

Van der Aalst introduces the two fundamental concepts from Definition \ref{def_perpetual}.
\begin{definition}[Lucency and perpetual Petri net]\label{def_perpetual}
\begin{enumerate}
\item A Petri net $(N,\mu_0)$ is \emph{lucent} if for any pair $(\mu_1,\mu_2)$ of reachable markings
$$en(N,\mu_1)=en(N,\mu_2) \implies \mu_1=\mu_2.$$
\item A Petri net $(N,\mu_0)$ is \emph{perpetual} if it is live and bounded and there exists a \mbox{cluster $cl \subset N$} such that $\mu_{cl}$ is a home marking of $(N,\mu_0)$. The cluster $cl$ is named a \emph{regeneration cluster} of $(N,\mu_0)$, and $\mu_{cl}$ is a \emph{regeneration marking} of $(N,\mu_0)$.
\end{enumerate}
\end{definition}

\noindent In \cite{Aal2018} the cluster $cl$ is named a \emph{home cluster} and paraphrased as a ``regeneration point''. Different than~\cite{Aal2018} we prefer the name \emph{regeneration cluster}. The term \emph{home cluster} could suggest erroneously that any home marking relates to a home cluster. The property to be a regeneration cluster depends on the Petri net $(N,\mu_0)$, not alone on the subnet $cl \subset N$.\medskip

We will often rely on the \emph{fundamental property} of reachable markings $\mu$ in a perpetual free-choice system $(N,\mu_0)$ with regeneration cluster $cl$:
\begin{itemize}
\item Each $P$-component $C \subset N$ contains exactly one place of $cl$ and has token \mbox{count $\Vert \mu \Vert_C=1$}.
\item The Petri net $(N,\mu_0)$ is safe.
\item If $N$ is a $T$-net then each elementary circuit $\gamma \subset N$ contains the unique \mbox{transition $t_{cl} \in cl_T$} and has token count $\Vert \mu \Vert_\gamma =1$.
\end{itemize}

Proof:
 Each $P$-component $C\subset N$ has a positive token count at $\mu_{cl}$. \mbox{Hence $C$} contains exactly one place of $cl$ and satisfies
$\Vert \mu_{cl} \Vert_C=1$. The token count \mbox{of $C$} is the same for all reachable markings. Because the well-formed free-choice net $N$ is covered \mbox{by $P$-components} each perpetual free-choice system $(N,\mu_0)$ is safe. In the particular case of a $T$-net the $P$-components are exactly the elementary circuits of $N$.

\medskip
We consider the whole subject of lucency as a question about two equivalence relations on the set of reachable markings of a given Petri net $(N,\mu_0)$: In addition to the equality of reachable markings one considers the relation of \emph{enabling equivalence}.

\begin{definition}[Enabling equivalence]\label{def_enablingequivalence}
A pair of reachable markings $(\mu_1,\mu_2)$ of a Petri net $(N,\mu_0)$ is \emph{enabling equivalent} if
$$en(N,\mu_1)=en(N,\mu_2)$$
\end{definition}
\noindent Then the Petri net problem under consideration reads: When does enabling equivalence imply marking equality?

\medskip\noindent Our proof of van der Aalst's theorem, see Theorem \ref{theor_vanaalsttheorem}, starts with a \mbox{pair $(\mu_1,\mu_2)$} of enabling equi\-valent markings of $N$. The well-formed free-choice net $N$ has \mbox{a $cl$-adapted $CP$-decomposition} with a final strongly connected $T$-net $\overline{N}$. The proof relies on firing a global shutdown sequence $\sigma$, the concatenation of shutdown sequences for all $CP$-subnets. The firing squeezes out all tokens from \mbox{the $CP$-subnets} and creates a pair $(\overline{\mu}_{1,sd},\overline{\mu}_{2,sd})$ of markings of $\overline{N}$. These markings are still enabling equivalent with respect to the resulting marking of $\overline{N}$. Hence the original claim reduces to the analogous claim for a perpetual marking of $\overline{N}$. Here the marking equality follows by elementary methods \mbox{for $T$-systems}. During the proof we have to keep an eye on how the $CP$-algorithm from \mbox{Theorem \ref{theor_existencecpexhaustion}} propagates in each step the following properties
$$\text{ well-formedness, perpetuality, enabling equivalence, and marking equality}.$$
The logical dependencies between the intermediate results is clarified by the diagram from Figure \ref{fig_planofproof}:
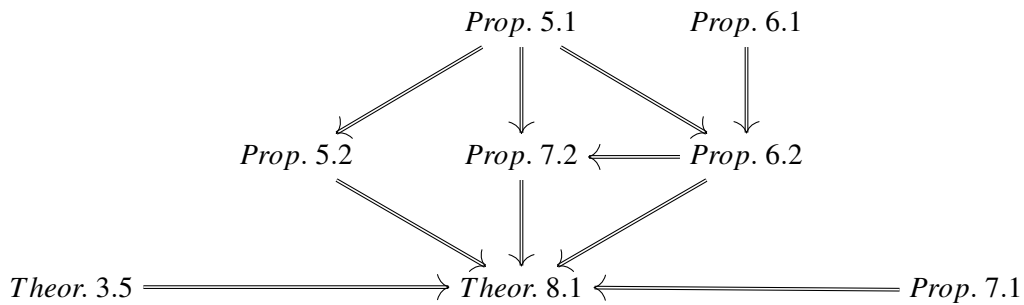
\begin{figure}[H]
$$\begin{tikzpicture}
\matrix (m) [matrix of math nodes,row sep=3em,column sep=3em,minimum width=3em] {
 	& 	& Prop.\ \ref{prop_perpetualtsystem} 	& Prop.\ \ref{prop_cpnetsubnetsperpetual}	&	 \\
	& Prop.\  \ref{prop_tsystemlucent}		& Prop.\  \ref{prop_propagatingenablingequivalence}	&  Prop.\  \ref{prop_markingequality}	&   \\	
Theor.\  \ref{theor_existencecpexhaustion}&  		&	Theor.\  \ref{theor_vanaalsttheorem}	&	& Prop.\  \ref{prop_propagatingperpetuality}	\\};
\path[-stealth] ;
    \draw[->, double] (m-3-1) --  (m-3-3);
    \draw[->, double] (m-1-3) -- 	(m-2-2);
    \draw[->, double] (m-1-3) -- 	(m-2-4);
    \draw[->, double] (m-2-4) -- 	(m-2-3);
    \draw[->, double] (m-2-2) -- 	(m-3-3);
    \draw[->, double] (m-2-4) -- 	(m-3-3);	
    \draw[->, double] (m-1-3) -- 	(m-2-3);
    \draw[->, double] (m-1-4) --  (m-2-4);
    \draw[->, double] (m-2-3) --  (m-3-3);
    \draw[->, double] (m-3-5) --  (m-3-3);
\end{tikzpicture}$$\vspace*{-10mm}
\caption{Overview of the proof of Theorem \ref{theor_vanaalsttheorem}}
\label{fig_planofproof}
\end{figure}

\begin{itemize}
\item Theorem \ref{theor_existencecpexhaustion}	splits the net of a perpetual free-choice system into an adapted $CP$-exhaustion with a final
strongly connected $T$-net.

\item Propositions \ref{prop_perpetualtsystem} and \ref{prop_tsystemlucent} study for $T$-systems the interplay of their deterministic occurrence semantics with a regeneration cluster. Proposition \ref{prop_perpetualtsystem} collects relevant properties of  perpetual $T$-systems.
    Proposition~\ref{prop_tsystemlucent} concludes that perpetual $T$-systems are lucent.

\item The analogue for adapted $CP$-nets in perpetual free-choice systems is proved in Proposition \ref{prop_cpnetsubnetsperpetual} and \ref{prop_markingequality}.

\item Proposition \ref{prop_propagatingperpetuality} and \ref{prop_propagatingenablingequivalence} ensure: Each step of the $CP$-exhaustion algorithm from
Theorem \ref{theor_existencecpexhaustion} propagates perpetuality and enabling equivalence to the next level.

\item Theorem \ref{theor_vanaalsttheorem} restates and proves van der Aalst's theorem.
\end{itemize}
\noindent
The intermediate results will be proved in Sections \ref{sect_tsystemsperpetualitymarkings}, \ref{sect_cpsystemsperpetualitymarkings} and \ref{sect_propagating}. Section \ref{sect_proofvanderaalsttheorem} brings together all results to show van der Aalst's theorem.

\section{Enabling equivalence and marking equality in perpetual $T$-systems}\label{sect_tsystemsperpetualitymarkings}

The present section proves van der Aalst's theorem in the particular case of a \mbox{perpetual $T$-system}, see Theorem \ref{prop_tsystemlucent}. The proof for $T$-systems is much easier than the proof for free-choice systems in general. In the presence of a regeneration cluster marking equivalence provides certain distinguished paths of the underlying $T$-net. Due to Proposition \ref{prop_perpetualtsystem}, part 2 ii) these paths are safe in the perpetual $T$-system.

\begin{proposition}[Token count of paths in $T$-systems]\label{prop_perpetualtsystem}
Let $(TN,\mu)$ be a $T$-system. For each transition $t\in TN_T$ and pre-place $q \in\ ^\bullet t$ \mbox{with $\mu(q)=0$} denote by
$$en(TN,\mu)_q:=\left\{(\tau,\delta):\ \tau \in en(TN,\mu),\ \delta=(\tau,...,q,t) \subset TN \text{ elementary}\right\}$$
the set of all enabled transitions $\tau$ together with their elementary paths to $t$, which \mbox{pass $q$}.
\begin{enumerate}
\item \emph{General $T$-system}: Consider an arbitrary \mbox{transition $t\in TN_T$} which is enabled at $\mu$. For each pre-place $q \in\ ^\bullet t \text{ with } \mu(q)=0$ exists a pair
$$(\tau,\delta) \in en(TN,\mu)_q \text{ with } \Vert \mu \Vert_\delta =0.$$
\item \emph{Perpetual $T$-system}: Assume that $(TN,\mu)$ is even perpetual with regeneration cluster $cl$ and denote \mbox{by $t_{cl}$} the unique transition of $cl$.
\begin{flushleft}
i) For each transition $t\in TN_T$ with a pre-place $p\in \ ^\bullet t$ with $\mu(p)=1$ exists for each \mbox{pre-place $q \in\ ^\bullet t$} with $\mu(q)=0$ a pair
$$(\tau,\delta) \in en(TN,\mu)_q \text{ with } \Vert \mu \Vert_\delta =0 \text{ and } t_{cl} \notin \delta_{seg}$$
for the segment $\delta_{seg}:=(\tau,...,q)$ of $\delta$.

ii) Each elementary path $\delta \subset TN$ with $t_{cl} \notin \delta$ has token count $\Vert \mu \Vert_\delta \leq 1$.
\end{flushleft}
\end{enumerate}
\end{proposition}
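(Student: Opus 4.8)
The plan is to treat the two existential parts (1 and 2i) by a single \emph{backward-tracing} construction along the given $T$-net and to prove the estimate 2ii) by prolonging paths to elementary circuits. Throughout I rely on the conservation law from Remark~\ref{rem_enablingtsystem} (the token count of a circuit is the same for every reachable marking, and an enabled transition stays enabled until it fires itself) and, in the perpetual case, on the fundamental property recalled above: $(TN,\mu)$ is safe, and every elementary circuit contains $t_{cl}$ and carries exactly one token. For Part~1 I read the path $\delta$ from right to left starting at the unmarked pre-place $q$: since $TN$ is a $T$-net each place has a unique pre-transition, so from an unmarked place I step to its pre-transition; if that transition is enabled I stop, otherwise it possesses an unmarked pre-place and I continue from there. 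Every node traversed is an unmarked place or a transition, so each prefix is token-free, and the resulting forward path $\delta=(\tau,\dots,q,t)$ satisfies $\Vert\mu\Vert_\delta=0$ and begins at an enabled $\tau$. The only point to secure is termination at an enabled transition: the trace can fail only by revisiting a node, which would produce an elementary circuit all of whose places are unmarked; excluding such token-free circuits (automatic under liveness, and part of the fundamental property in the perpetual setting) forces the trace to stop at an enabled transition.

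For Part~2ii) I first note that, since every elementary circuit contains $t_{cl}$, the net $D:=TN\setminus\{t_{cl}\}$ is \emph{acyclic}; its sources are the places of $t_{cl}^\bullet$ and its sinks the places of ${}^\bullet t_{cl}$ (each place has a unique pre- and post-transition, and well-formedness gives every transition a pre-place, so no transition is a source or sink of $D$). Suppose an elementary path $\delta$ with $t_{cl}\notin\delta$ carried two marked places $p_1,p_2$, with $p_1$ before $p_2$. Fixing a topological order of $D$, I prolong the subpath from $p_1$ to $p_2$ backward to a source and forward to a sink; by the ordering these prolongations meet the subpath only at $p_1$ resp.\ $p_2$, so the concatenation is an elementary $D$-path through $p_1$ and $p_2$, which closes through $t_{cl}$ to an elementary circuit containing both. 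That circuit would have token count $\ge 2$, contradicting the fundamental property. Hence at most one place of $\delta$ is marked, and by safeness $\Vert\mu\Vert_\delta\le 1$.

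For Part~2i) I run the backward trace of Part~1 but keep it inside $D$, never stepping through $t_{cl}$; then the segment $\delta_{seg}=(\tau,\dots,q)$ avoids $t_{cl}$ by construction. As $D$ is acyclic the trace cannot cycle, so it either ends at an enabled transition (as required) or gets stuck at an unmarked source $q^{*}\in t_{cl}^\bullet$ whose pre-transition in $TN$ is $t_{cl}$. The marked pre-place $p$ rules out the stuck case: the circuit $\gamma_p$ through $p$ has its unique token at $p$, and since $p^\bullet=\{t\}$ the place $p$ sits immediately before $t$, so the entire tail of $\gamma_p$ running from $t$ to $t_{cl}$ is token-free. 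Concatenating $t_{cl}\to q^{*}$, the traced $D$-path from $q^{*}$ to $q$, the edge $q\to t$, and this token-free tail yields a closed walk through $t_{cl}$ all of whose places are unmarked; extracting an elementary circuit from it gives a circuit of token count $0$, contradicting the fundamental property. Thus the trace reaches an enabled $\tau$, and $\delta=(\tau,\dots,q,t)$ has $\Vert\mu\Vert_\delta=0$ with $t_{cl}\notin\delta_{seg}$.

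I expect the main obstacle to be Part~2i), because it imposes two competing demands on the search — reaching an \emph{enabled} transition while \emph{avoiding} $t_{cl}$ — and naive backward tracing may be driven toward $t_{cl}$. The decisive device is the token-free closed-walk contradiction above, which uses both the hypothesis that $t$ has a marked pre-place and the perpetual invariant that every circuit carries exactly one token. Two secondary points require care: in 2ii) one must check that the prolonged path is genuinely elementary (handled by the topological order of $D$), and in Part~1 one must justify termination from the absence of token-free circuits, the place where the standing liveness (or the fundamental property) is genuinely needed.
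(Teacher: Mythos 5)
Your proposal is correct, and while your Part 1 is essentially the paper's backward-tracing idea in streamlined form (the paper repeatedly pulls in a whole path from the set $en(TN,\mu)_{q'}$ and splits it at the last marked transition $t_{split}$, whereas you simply step place-by-place through unique pre-transitions; both terminations rest on excluding token-free circuits), your Part 2 takes a genuinely different route. For 2ii) the paper embeds $\delta$ into an elementary circuit built from two elementary circuits $\gamma_1,\gamma_2$ through the marked endpoints and disposes of three intersection cases ($x_1\in\gamma_{11}\cap\delta$, $x_2\in\delta\cap\gamma_{22}$, $x_3\in\gamma_{11}\cap\gamma_{22}$), each contradicting the fundamental property; you instead isolate the structural lemma that $D:=TN\setminus\{t_{cl}\}$ is acyclic with sources $t_{cl}^\bullet$ and sinks ${}^\bullet t_{cl}$, and a topological order lets one elementary prolongation replace the whole case analysis. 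For 2i) the divergence is larger: the paper argues dynamically, firing a minimal occurrence sequence $\sigma$ that enables $t$, claiming minimality freezes the token at $p$, and deducing $t_{cl}\notin\sigma$ (hence $t_{cl}\notin\delta_{seg}$) from the greediness of $cl$; your argument is entirely static --- the trace runs inside $D$, where acyclicity already guarantees termination, and the only bad outcome (getting stuck at an unmarked place $q^{*}\in t_{cl}^\bullet$) is refuted by splicing the traced token-free path with the token-free arc of the circuit $\gamma_p$ from $t$ to $t_{cl}$ into a token-free closed walk, whose extracted elementary circuit contradicts the invariant count $1$. This buys independence from reasoning about minimal sequences and frozen tokens, which is the tersest point of the paper's proof, while using only the strong connectedness needed to produce $\gamma_p$ --- a device the paper itself uses in 2ii). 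One caveat, shared with the original: in Part 1 the exclusion of token-free circuits requires a hypothesis beyond ``$T$-system'' (you invoke liveness; note that the paper later applies Part 1 to the non-live system $(\hat N,\hat\mu)$ in Proposition \ref{prop_markingequality}, where the exclusion instead follows from the acyclicity of the $CP$-subnet established in Proposition \ref{prop_cpnetsubnetsperpetual}), so your parenthetical justification there is at the same level of rigor as the paper's, not a new gap.
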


\begin{proof}

\vspace*{-7mm}
\begin{enumerate}
\item Because $(TN,\mu)$ is a $T$-system, Remark \ref{rem_enablingtsystem} implies $en(TN,\mu)_q \neq \emptyset$. The following algorithm returns a solution $(\tau,\delta)$:

\noindent Initialization: Define the pair $(q^\prime,\delta_{tail}):=(q,\delta_{t})$ with $\delta_{t}:=(t)$ the constant path.

Iteration step $(q^\prime,\delta_{tail})$: Save $\delta_{old}:=\delta_{tail}$. Because $\Vert \mu \Vert_{\delta_{tail}}=0$ also the transition
$$t^\prime \in TN_T \text{ with } \{t^\prime\} = {(q^\prime)} ^\bullet$$
is enabled at a reachable marking of $(N,\mu)$. Due to $\mu(q^\prime)=0$ there exists a pair
$$(\tau^\prime,\delta^\prime) \in en(TN,\mu)_{q^\prime}$$
\begin{itemize}
\item If $\Vert \mu \Vert_{\delta^\prime}=0$ then return $(\tau,\delta):=(\tau^\prime,\delta^\prime * \delta_{old})$.

\item Otherwise choose the uniquely determined transition $t_{split} \in \delta^\prime$ such that the tail \mbox{of $\delta^\prime$}
$$\delta_{tail}:=(t_{split},...,q^\prime,t^\prime) \text{ satisfies } \Vert \mu \Vert_{\delta_{tail}}=0,$$
and set
$$\delta_{new}=\delta_{tail}*\delta_{old}.$$
If $t_{split} \in en(TN,\mu)$ then return $(\tau,\delta):=(t_{split},\delta_{new})$.

\medskip Otherwise choose a pre-place $q_{pre} \in\ ^\bullet t_{split}$ with $\mu(q_{pre})=0$ and reiterate with
$$(q^\prime,\delta_{tail}):=(q_{pre},\delta_{new})$$
\end{itemize}

\noindent The iteration terminates after finitely many steps: The length of the token-free tail increases during each step. But the length is bounded because the iteration does not construct a token-free circuit. Figure \ref{fig_activation} illustrates the first iteration step.

\begin{figure}[!ht]
\centering
\includegraphics[width=7.2cm]{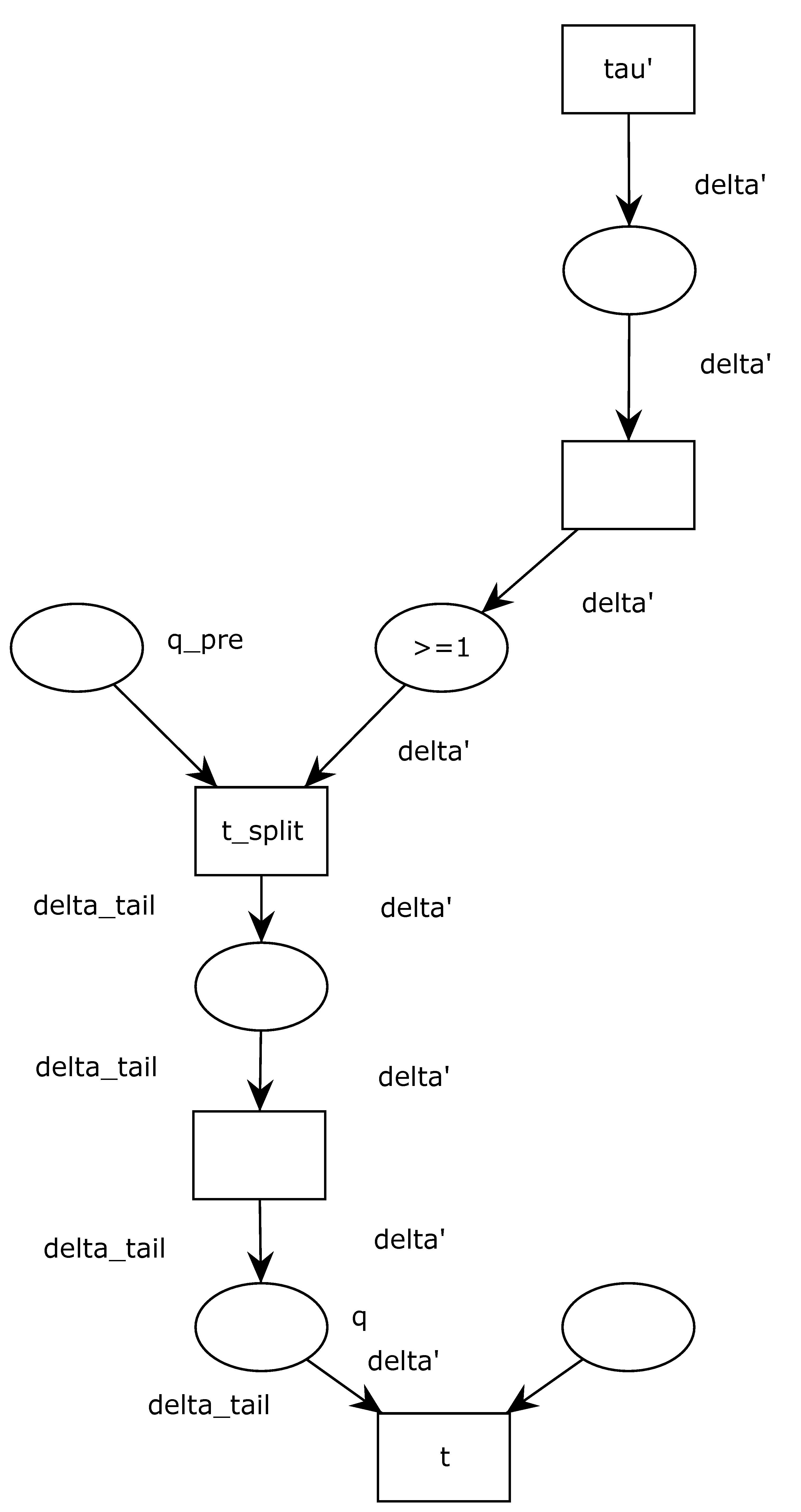}
\caption{First iteration step, case $t_{split} \notin en(TN,\mu)$}
\label{fig_activation}
\end{figure}

\item
i) Assume $\mu(p)=1$. Part 1) provides a pair
$$(\tau, \delta) \in en(TN,\mu)_q \text{ with } \Vert \mu \Vert_{\delta_q}=0$$
The firing of a minimal occurrence sequence $\mu \xrightarrow{\sigma} \mu_{post}$ with $t \in en(TN,\mu_{post})$ forwards all tokens on the pre-places of $\tau$ along $\delta$ \mbox{to $q$}. Minimality of $\sigma$ ensures that the token at $p$ is frozen during the firing \mbox{of $\sigma$}. Then the greediness of $cl$ \mbox{implies $t_{cl} \notin \sigma$} and a posteriori $t_{cl} \notin \delta_{seg}$.

\noindent ii) The proof is indirect. W.l.o.g.
$$\delta=(p_1,...,p_2) \subset TN,\ p_1\neq p_2,$$
is elementary with $\mu(p_1) = \mu(p_2) = 1$. For $j=1,2$ there exist two elementary circuits
$$\gamma_j \subset TN \text{ with } p_j \in \gamma_j.$$
Due to the fundamental property of the perpetual $T$-system $(TN,\mu)$ both circuits have token count
$$\Vert \mu_{cl} \Vert_{\gamma_j}=1 \text{ with } t_{cl} \in \gamma_1 \cap \gamma_2$$
Decompose each $\gamma_j$ as the concatenation
$$\gamma_j=\gamma_{j1}*\gamma_{j2}$$
with the segments
$$\gamma_{j1}=(t_{cl},...,p_j) \text{ and } \gamma_{j2}:=(p_j,...,t_{cl})$$
Claim: The concatenation
$$\gamma_{11}* \delta *\gamma_{22}$$
induces a circuit $\gamma$ which is elementary. Otherwise there exist a node
$$x_1 \in \gamma_{11} \cap \delta \text{ or } x_2 \in \delta \cap \gamma_{22} \text{ or } x_3 \in \gamma_{11} \cap \gamma_{22}$$
In case of a node
$$x_1 \in \gamma_{11} \cap \delta$$
the concatenation of the segments
$$(p_1,...,x_1) \text{ of } \delta \text{ and } (x_1,...,p_1) \text{ of } \gamma_{11}$$
induces a circuit which avoids $t_{cl}$. Analogously, in case of a node
$$x_2 \in \delta \cap \gamma_{22}$$
the concatenation of the segments
$$(p_2,...,x_2) \text{ of } \gamma_{22} \text{ and } (x_2,...,p_2) \text{ of } \delta$$

\begin{figure}[H]
\centering
\includegraphics[width=10.2cm]{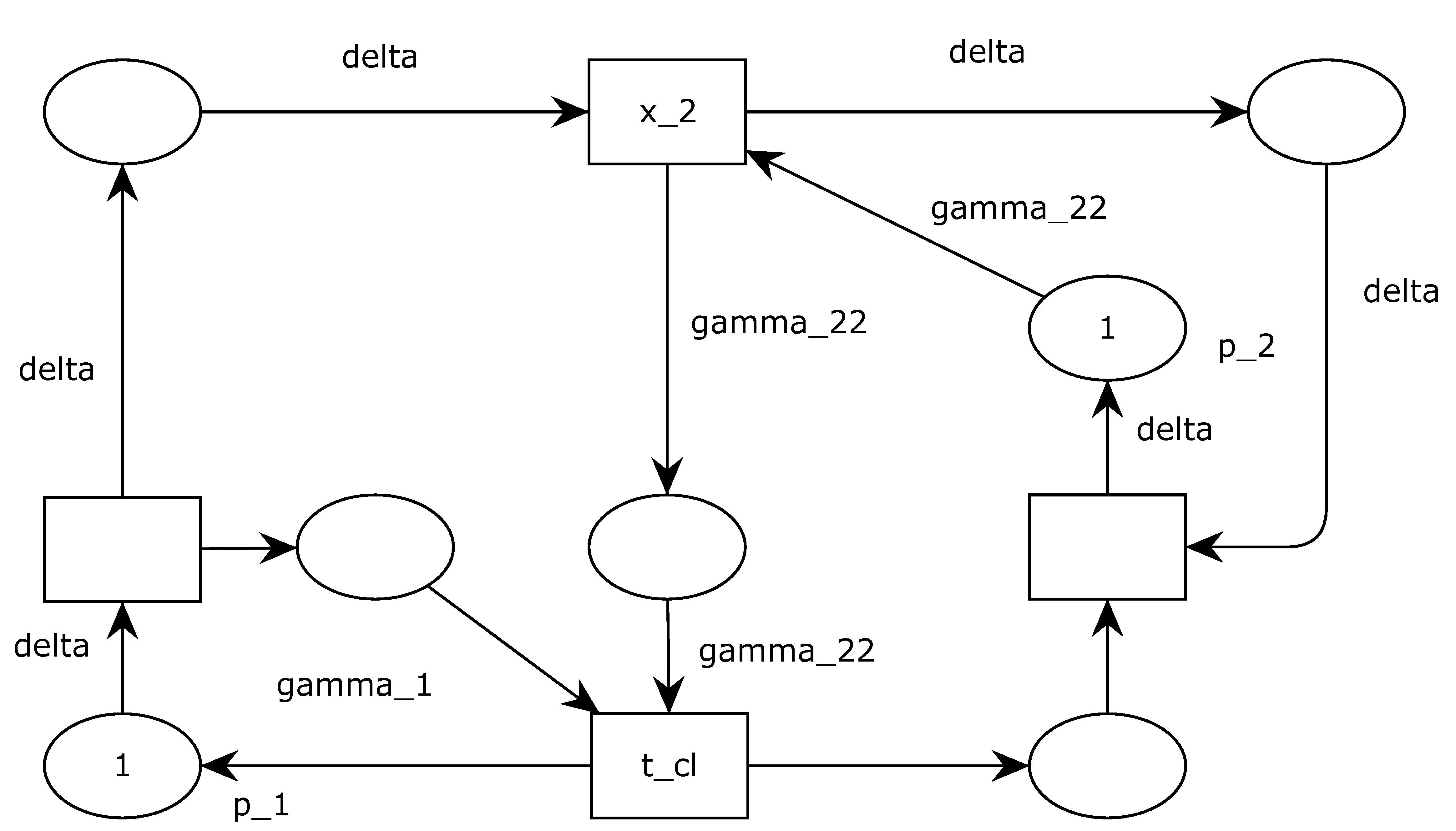}
\caption{Indirect proof: Common node $x_2 \in \delta \cap \gamma_{22}$}
\label{fig_perpetualtsystem}
\end{figure}

\noindent induces a circuit which avoids $t_{cl}$, see Figure \ref{fig_perpetualtsystem}. In both cases the resulting circuit \mbox{avoids $t_{cl}$} and has positive token count. Eventually for a node
$$x_3 \in \gamma_{11} \cap \gamma_{22}$$
the concatenation of the segments
$$(t_{cl},...,x_3) \text{ of } \gamma_{11} \text { and } (x_3,...,t_{cl}) \text{ of } \gamma_{22}$$
induces a circuit which contains $t_{cl}$ and is token-free, because both segments are token-free. In each of the three cases the fundamental property of the \mbox{perpetual $T$-system $(TN,\mu)$} provides a contradiction, which proves the intermediate claim.

\noindent As a consequence the circuit $\gamma$ is elementary with token count
$$\Vert \mu \Vert_{\gamma} \geq \Vert \mu \Vert_\delta = 2,$$
contradicting the fundamental property of perpetual $T$-systems
$$\Vert \mu \Vert_\gamma = \Vert \mu_{cl} \Vert_\gamma =1.$$
\end{enumerate}

\vspace*{-12mm}
\end{proof}
\noindent
The indirect argumentation employed in the proofs of Proposition \ref{prop_tsystemlucent}, \ref{prop_markingequality} and \ref{prop_propagatingenablingequivalence} relies on the same type of contradiction: Construct a reachable marking $\mu^\prime$ and an elementary path $\delta^\prime$ with token \mbox{count $\Vert \mu^\prime \Vert_{\delta^\prime}\geq 2$}. Then apply Proposition \ref{prop_perpetualtsystem}, part 2) respectively Proposition \ref{prop_cpnetsubnetsperpetual}, part 2) to conclude
$\Vert \mu^\prime \Vert_{\delta^\prime}\leq 1$.

\begin{proposition}[Perpetual $T$-systems are lucent]\label{prop_tsystemlucent}
Consider a \mbox{perpetual $T$-system $(TN,\mu_0)$} with regeneration cluster $cl$. For each pair $(\mu_1,\mu_2)$ of reachable markings of $(TN,\mu_0)$ enabling equivalence implies marking equality, i.e.
$$en(TN,\mu_1)=en(TN,\mu_2) \implies \mu_1=\mu_2$$
\end{proposition}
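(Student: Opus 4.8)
The plan is to argue by contradiction. Suppose $en(TN,\mu_1)=en(TN,\mu_2)$ but $\mu_1 \neq \mu_2$. Since $(TN,\mu_0)$ is a perpetual $T$-system, it is safe by the fundamental property, so every place carries either $0$ or $1$ token at each reachable marking. From $\mu_1 \neq \mu_2$ there is a place $q$ with, say, $\mu_1(q)=1$ and $\mu_2(q)=0$. My first step is to locate a transition $t$ that is a post-transition of $q$ (so $q \in {}^\bullet t$) and to compare the situations of $t$ under the two markings. The asymmetry at $q$ should force a difference in enabledness somewhere downstream, and the whole point is to contradict $en(TN,\mu_1)=en(TN,\mu_2)$.

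The key tool will be Proposition \ref{prop_perpetualtsystem}. First I would like to arrange a transition $t$ that is enabled under one marking but, via the token asymmetry at $q$, reveals a structural path on which the two markings disagree. Concretely, I would consider the place $q$ with $\mu_1(q)=1$, $\mu_2(q)=0$, let $t$ be its unique post-transition (in a $T$-net every place has $card\ q^\bullet = 1$), and examine $t$ under $\mu_2$. Since $\mu_2(q)=0$, transition $t$ is not enabled at $\mu_2$, and since $en(TN,\mu_1)=en(TN,\mu_2)$, it is also not enabled at $\mu_1$; hence $t$ has some other pre-place $p$ with $\mu_1(p)=0$. Now I would apply Proposition \ref{prop_perpetualtsystem}, part 1) to the marking $\mu_1$ at this pre-place $p$ of $t$ to obtain an enabled transition $\tau$ together with a token-free elementary path $\delta=(\tau,\dots,p,t)$ with $\Vert \mu_1 \Vert_\delta = 0$. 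Because $\tau \in en(TN,\mu_1)=en(TN,\mu_2)$, the transition $\tau$ is enabled at $\mu_2$ as well.

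The core of the argument is then to track this path $\delta$ under $\mu_2$. I would look at the initial token-free path under $\mu_1$ and compare it with the token distribution of $\mu_2$ along the same nodes. The asymmetry at $q$ (where $\mu_2(q)=0$ but $\mu_1(q)=1$) together with the shared enabledness of $\tau$ should produce, under $\mu_2$, an elementary path carrying two tokens: one ``extra'' token surviving from the disagreement at $q$, stacked against the token forced by $\tau$ being enabled. This is exactly the situation that Proposition \ref{prop_perpetualtsystem}, part 2 ii) forbids for paths avoiding $t_{cl}$, and part 2 i) is what guarantees the relevant path segment does avoid $t_{cl}$. The contradiction-scheme announced after Proposition \ref{prop_perpetualtsystem} — ``construct a reachable marking $\mu'$ and an elementary path $\delta'$ with $\Vert \mu' \Vert_{\delta'} \geq 2$, then invoke part 2) to conclude $\Vert \mu' \Vert_{\delta'} \leq 1$'' — is precisely the engine I expect to close the argument.

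The hard part will be organizing the bookkeeping so that the two competing tokens genuinely sit on a single elementary path avoiding $t_{cl}$, rather than on two disjoint circuits where no contradiction arises. I expect the subtlety to be in choosing which marking ($\mu_1$ or $\mu_2$) to run the firing on, and in invoking part 2 i) with the right pre-place $p$ satisfying $\mu(p)=1$ so that the resulting segment $\delta_{seg}=(\tau,\dots,q)$ is guaranteed to avoid $t_{cl}$; only then does part 2 ii) apply to the composite path. Once the path and marking are set up so that the extra token and the $\tau$-token lie on a common elementary $t_{cl}$-avoiding path, the token count $\geq 2$ contradicts $\Vert \mu \Vert_\delta \leq 1$, and by symmetry (swapping the roles of $\mu_1$ and $\mu_2$) the case $\mu_2(q)=1,\ \mu_1(q)=0$ is identical, completing the proof.
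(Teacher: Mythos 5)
Your setup reproduces the paper's proof exactly up to its midpoint: the indirect assumption, the place $q$ where the markings disagree, its unique post-transition $t$, the transfer $t \notin en(TN,\mu_2) \implies t \notin en(TN,\mu_1)$ forcing a second pre-place $p$ with $\mu_1(p)=0$, the first application of Proposition \ref{prop_perpetualtsystem}, part 1) at $\mu_1$, and the transport of the enabled transition $\tau$ to $\mu_2$. The gap is the closing move, which you explicitly leave open (``organizing the bookkeeping'', ``which marking to run the firing on''), and the one concrete guess you make there is off: the two-token path does \emph{not} arise ``under $\mu_2$''. The extra token sits at $q$ under $\mu_1$ (recall $\mu_1(q)=1$, $\mu_2(q)=0$), so the correct move --- the paper's --- is a \emph{second, mirror} application of part 1), this time at $\mu_2$ and targeting precisely the place $q$ that is unmarked there. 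This yields a pair $(\tau',\delta') \in en(TN,\mu_2)_q$ with $\Vert \mu_2 \Vert_{\delta'}=0$; enabling equivalence then transports $\tau'$ back to $\mu_1$, and firing $\mu_1 \xrightarrow{\tau'} \mu'$ stacks the freshly created token against the token already at $q$ (which cannot be consumed, since $q^\bullet=\{t\}$ and $\tau' \neq t$), giving $\Vert \mu' \Vert_{(\tau',\dots,q)} \geq 2$ at a marking reached from $\mu_1$, not from $\mu_2$. Your sketch performs only the first application of part 1) and never actually produces the path on which the contradiction lives.

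A second, related defect: your plan to secure $t_{cl}$-avoidance by ``invoking part 2 i) with the right pre-place'' does not go through in general. Part 2 i) requires a \emph{single} marking $\mu$ with $\mu(p)=1$ at one pre-place and $\mu(q)=0$ at another, whereas in the decisive configuration the token-free path is harvested at $\mu_2$ --- where $q$ is unmarked and the value $\mu_2(p) \in \{0,1\}$ is never pinned down, so $t$ may have no marked pre-place at $\mu_2$ at all --- while the extra token lives at $\mu_1$. The paper therefore does not cite part 2 i) here; it re-runs the frozen-token/greediness argument underlying the \emph{proof} of part 2 i) in this cross-marking situation: the token at $q$ under $\mu_1$ stays frozen while the $\tau'$-token is forwarded, and greediness of $cl$ then excludes $t_{cl}$ from the segment $(\tau',\dots,q)$; only after that does part 2 ii) deliver the bound $\Vert \mu' \Vert \leq 1$ and the contradiction. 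Your instinct about the overall engine (token count $\geq 2$ against the bound of part 2 ii) on $t_{cl}$-avoiding elementary paths) is exactly right, but the two steps you flag as ``hard'' --- constructing the offending path on the correct marking and justifying $t_{cl}$-avoidance without part 2 i) --- are the actual content of the proof, and neither is supplied.
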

During the indirect proof of Proposition \ref{prop_tsystemlucent} a possible difference between the pair of markings is pinned down to different values at the pre-places of a distinguished transition $t$. The transition is not enabled at neither of the two markings. Due to the liveness of $(TN,\mu_0)$ the missing tokens can be forwarded to the pre-places of $t$ along two token-free paths. One concludes that one of the two paths avoids $t_{cl}$ and can be marked with at least 2 tokens. The result contradicts Proposition \ref{prop_perpetualtsystem}, part 2 ii).
\begin{proof}
The proof is indirect. The assumption $\mu_1\neq \mu_2$ implies the existence of a \mbox{place $p \in TN_P$}, marked \mbox{at $\mu_1$} but unmarked \mbox{at $\mu_2$}. The \mbox{transition $t\in TN_T$} with
$p^\bullet=\{t\}$ satisfies
$$t \notin en(TN,\mu_2).$$
Hence by enabling equivalence
$$t \notin en(TN,\mu_1).$$
As a consequence $t$ has a second pre-place $q \in\ ^\bullet t$ which is unmarked \mbox{at $\mu_1$}, see \mbox{Figure \ref{fig_lucent}}, left and right. W.l.o.g.
$$^\bullet t=\{p,q\}$$
and
$$(\mu_1(p),\mu_1(q))=(1,0) \text{ and } (\mu_2(p),\mu_2(q))=(0,*).$$

\begin{figure}[H]
\centering
\includegraphics[width=10cm]{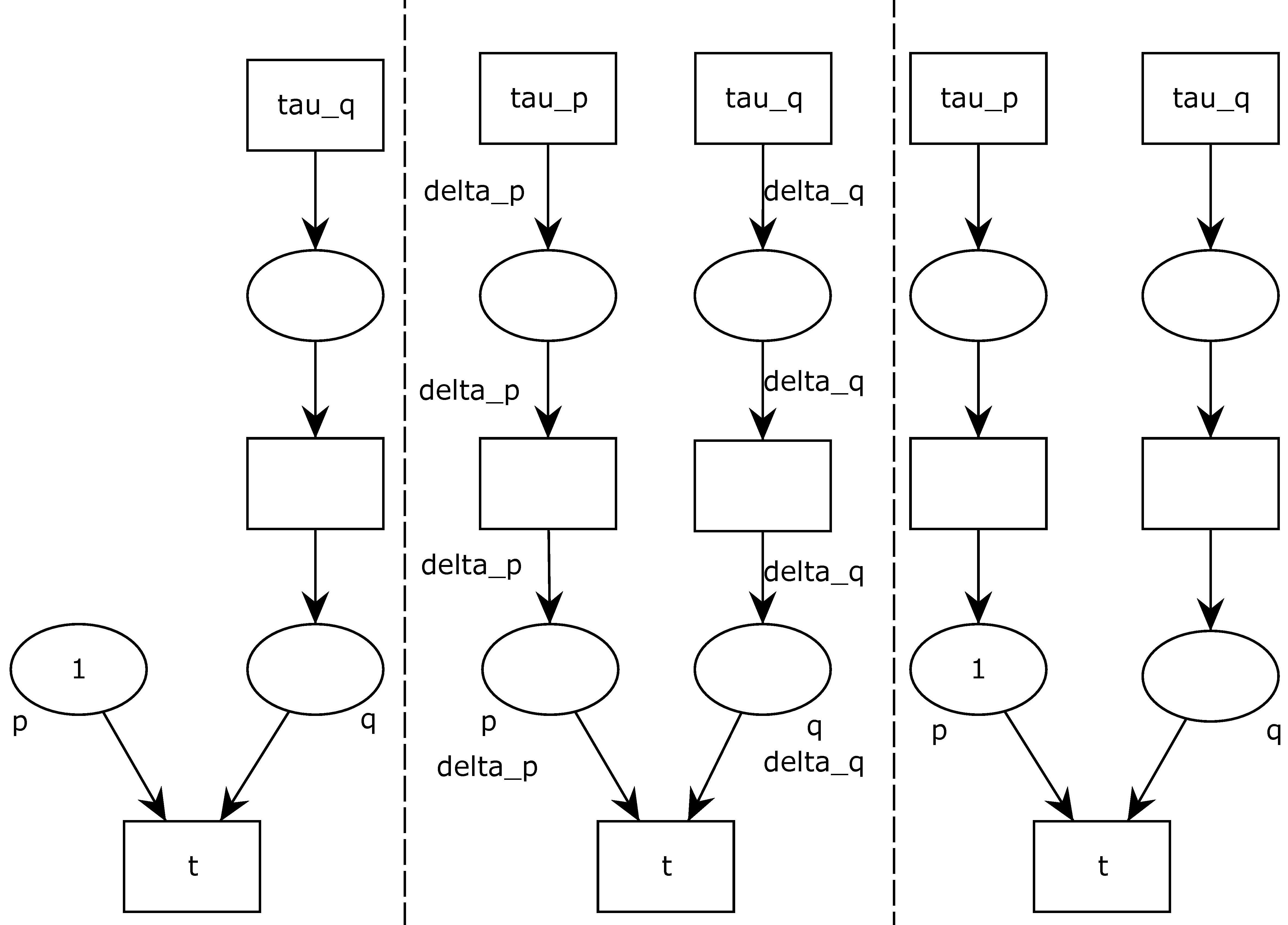}
\caption{ad Prop. \ref{prop_tsystemlucent}: left $\mu_1$,\ middle $\mu_2$,\ right $\mu_1$; ad Prop. \ref{prop_markingequality}: left $\hat{\mu}_1$,\ middle $\hat{\mu}_2$,\ right\ $\hat{\mu}_1$}
\label{fig_lucent}
\end{figure}

\noindent Here the value $\mu_2(q)\in \{0,1\}$ is not yet known. With the notations from Proposition \ref{prop_perpetualtsystem}:
\begin{itemize}
\item \emph{Triple $(\mu_1, \tau_{q}, \delta_q)$}: Proposition \ref{prop_perpetualtsystem}, part 1) provides a pair
$$(\tau_q,\delta_q) \in en(TN,\mu_1)_q \text{ with } \Vert \mu_1 \Vert_{\delta_q}=0$$
see Figure \ref{fig_lucent}, left-hand side.

\item \emph{Pair ($\mu_2,\tau_q$)}: By enabling equivalence
$$\tau_{q} \in en(TN,\mu_1) \implies \tau_{q} \in en(TN,\mu_2)$$

\item  \emph{Triple ($\mu_2,\tau_p,\delta_p$)}: Because $\mu_2(p)=0$ Proposition \ref{prop_perpetualtsystem}, part 1) provides a second pair
$$(\tau_p,\delta_p) \in en(TN,\mu_2)_p \text{ with } \Vert \mu_2 \Vert_{\delta_p}=0$$
see Figure \ref{fig_lucent}, middle.

\item \emph{Triple $(\mu_1,\tau_p,\delta_p)$}: By enabling equivalence
$$\tau_p \in en(TN,\mu_2) \implies \tau_p \in en(TN,\mu_1)$$
In particular $\tau_p \neq t$. Firing $\mu_1 \xrightarrow{\tau_p} \mu^\prime$ implies for the segment $\delta^\prime:=(\tau_p,...,p)$ of $\delta_p$
$$\Vert \mu^\prime \Vert_{\delta^\prime}\geq 2$$
because $\mu_1(p)=1$. Figure \ref{fig_lucent}, right-hand side shows the marking $\mu_1$. But the frozen token due \mbox{to $\mu_1(p)=1$} and the greediness of $cl$ \mbox{ensure $t_{cl} \notin \delta^\prime$}. Hence Proposition \ref{prop_perpetualtsystem}, \mbox{part 2 ii)} implies
$$\Vert \mu^\prime \Vert_{\delta^\prime} \leq 1,$$
a contradiction.\qed
\end{itemize}

\vspace*{-8mm}
\end{proof}

\section{Enabling equivalence and marking equality in $CP$-subnets of \\ perpetual free-choice systems}\label{sect_cpsystemsperpetualitymarkings}

In a perpetual free-choice system $(N,\mu_0)$ the adapted $CP$-subnets $\hat{N} \subset N$ and their induced markings have specific properties which are not shared by $CP$-subnets in general live and safe well-formed free-choice systems. These properties derive from the interplay of the regeneration marking and the shutdown sequences of $\hat{N}$.

\begin{proposition}[Token count in adapted $CP$-nets of perpetual free-choice systems]\label{prop_cpnetsubnetsperpetual}
Consider a perpetual free-choice system $(N,\mu_0)$ with a regeneration cluster $cl \subset N$. Let $\hat{N} \subset N$ be a \mbox{a $cl$-adapted} $CP$-subnet.
\begin{enumerate}
\item The $CP$-subnet $\hat{N}$ has no circuits. In particular, each path in $\hat{N}$\ is elementary.
\end{enumerate}
In addition, let $\mu$ be an arbitrary reachable marking \mbox{of $(N,\mu_0)$} and set $\hat{\mu}:=\mu|\hat{N}$.
\begin{enumerate}\addtocounter{enumi}{1}
\item Each path $\delta \subset \hat{N}$ has token count $\Vert \mu \Vert_\delta \leq 1$.

\item Firing a shutdown sequence of $\hat{N}$
$$\mu \xrightarrow{\sigma_{sd}} \mu_{sd}$$
removes all tokens from $\hat{N}$, i.e.
$$\mu_{sd}|\hat{N}=0$$
\item Each transition $t\in \hat{N}$ with a path $(p,...,t) \subset \hat{N},\ p \in \hat{N}_P$ \mbox{and $\hat{\mu}(p)=1$}, can be enabled by firing an occurrence sequence of $(\hat{N},\hat{\mu})$
$$\hat{\mu} \xrightarrow{\sigma} \hat{\mu}_{post} \text{ with } t_{in} \notin \sigma.$$
\end{enumerate}
\end{proposition}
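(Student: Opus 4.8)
The plan is to establish the four parts in the order 1), 2), 4), 3), since part 3) will fall out of part 4) at the end.

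\medskip\noindent\emph{Part 1)}. I would show that a circuit in $\hat N$ could never carry tokens, and hence would destroy liveness. Suppose $\gamma\subset\hat N$ is an elementary circuit. Since $\hat N$ is a transition-bordered $T$-subnet, every place on $\gamma$ has its unique pre- and post-transition again on $\gamma$; thus $\gamma$ is a strongly connected $P$-subnet of $N$ and, by Remark \ref{rem_concepts}, lies in a $P$-component $C$. The fundamental property gives $\Vert\mu_{cl}\Vert_C=1$, and because $\hat N$ is $cl$-adapted all places of $cl$ lie in $\overline N$ (Remark \ref{rem_existencestructuredynamics}, part 2), so the single token of $C$ sits outside $\gamma$ at the regeneration marking, i.e. $\Vert\mu_{cl}\Vert_\gamma=0$. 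As no transition off $\gamma$ can change the token count of $\gamma$ (each place of $\gamma$ has both neighbours on $\gamma$), this count is invariant, so $\Vert\mu\Vert_\gamma=0$ for \emph{every} reachable $\mu$. Then the post-transition of any place of $\gamma$ never sees its $\gamma$-pre-place marked, hence can never fire, contradicting liveness. So $\hat N$ is acyclic, and being a $T$-subnet every path in it is elementary.

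\medskip\noindent\emph{Part 2)}. Acyclicity lets me close a path into an elementary circuit inside one $P$-component. Assume $\Vert\mu\Vert_\delta\ge 2$; by safety I may take a subpath $\delta'=(p_1,\dots,p_2)\subset\hat N$ with $\mu(p_1)=\mu(p_2)=1$. I build a return path from $p_2$ to $p_1$: run forward in the acyclic net $\hat N$ from $p_2$ until leaving through a way-out transition (a sink of the DAG $\hat N$ is a way-out transition), cross the strongly connected complement $\overline N$ to a pre-place of the unique way-in transition $t_{in}$, then pass through $t_{in}$ and run inside $\hat N$ to $p_1$. The two $\hat N$-segments of this return path consist respectively of nodes strictly after $p_2$ and strictly before $p_1$ in the DAG order, so by part 1) they are disjoint from the interior of $\delta'$ and from each other, while the $\overline N$-segment meets nothing in $\hat N$. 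Hence $\gamma:=\delta'*(\text{return path})$ is an elementary circuit through $p_1$ and $p_2$, so it lies in a $P$-component $C$ (Remark \ref{rem_concepts}), and $\Vert\mu\Vert_C\ge\mu(p_1)+\mu(p_2)=2$ contradicts the fundamental property $\Vert\mu\Vert_C=1$.

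\medskip\noindent\emph{Part 4)}. Here I would forward tokens through $\hat N$ without ever using $t_{in}$. First note $t\ne t_{in}$ and that no place of $\hat N$ is a pre-place of $t_{in}$: otherwise the path $(p,\dots,t)$, resp. that pre-place, together with a path $t_{in}\rightsquigarrow\cdot$ would close a circuit, contradicting part 1). Consequently every transition $\ne t_{in}$ has all its pre-places in $\hat N$. The construction is the recursive forwarding of Proposition \ref{prop_perpetualtsystem}, part 1): to enable $t$ I must mark each $q_i\in{}^\bullet t$; an unmarked $q_i$ is filled by firing ${}^\bullet q_i$, whose own pre-places are filled recursively upstream, firing transitions in topological order. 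Since $q_i^\bullet=\{t\}$ for every pre-place of $t$, no transition fired in this process consumes an already-filled pre-place of $t$, so the partial fillings never interfere, and the recursion terminates because $\hat N$ is a finite DAG. The one thing making the recursion bottom out at tokens \emph{already present}, rather than demanding fresh inflow through $t_{in}$, is the balance property: \emph{for every reachable $\mu$ and every $\tau\in\hat N\setminus\{t_{in}\}$, if one pre-place of $\tau$ has a marked $\hat N$-ancestor then all pre-places of $\tau$ do.} Proving this balance property is what I expect to be the main obstacle. I would try to establish it by tracking reachable markings from the (in $\hat N$ token-free) regeneration marking $\mu_{cl}$: firing $t_{in}$ simultaneously endows \emph{every} place of $\hat N$ with a marked ancestor, since all of $\hat N$ is reachable from $t_{in}$; any $T$-transition marks all of its post-places and so propagates a token down every descendant branch; and the pre-places $q_i$ of a fixed transition are coupled through their common post-transition. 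The delicate point is that safety together with part 2) must be invoked to rule out a desynchronized build-up of tokens on the separate branches feeding $\tau$, so that the branches remain balanced. Granting the balance property, the forwarding produces the required $t_{in}$-free sequence enabling $t$.

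\medskip\noindent\emph{Part 3)}. This now follows from part 4). A shutdown sequence fires only transitions of $\hat N_T\setminus\{t_{in}\}$ and reaches a marking $\mu_{sd}$ enabling no transition of $\hat N$ other than $t_{in}$ (Remark \ref{rem_existencestructuredynamics}); moreover enabledness of a transition $\ne t_{in}$ depends only on $\hat N$-places, so the same holds in $(\hat N,\hat\mu_{sd})$. If some place $p\in\hat N$ were still marked at $\mu_{sd}$, its out-transition $t:=p^\bullet\ne t_{in}$ would have the marked ancestor $p$ along the path $(p,t)$, so by part 4) a $t_{in}$-free occurrence sequence of $(\hat N,\hat\mu_{sd})$ would enable it. But at $\hat\mu_{sd}$ no non-$t_{in}$ transition is enabled, so that sequence is empty and $t$ is already enabled --- contradicting that $t\ne t_{in}$ is not enabled at $\mu_{sd}$. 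Hence $\mu_{sd}|\hat N=0$.
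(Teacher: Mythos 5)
Parts 1) and 2) of your proposal are correct and essentially coincide with the paper's own argument: the token count of a circuit in the $cl$-adapted $T$-subnet $\hat N$ is invariant and vanishes at the home marking $\mu_{cl}$ (since $cl_P\subset\overline N_P$), killing liveness; and a path in the acyclic $\hat N$ is closed through the strongly connected complement into an elementary circuit lying in a $P$-component of token count $1$. Your contradiction setup with two marked places $p_1,p_2$ and the DAG-order disjointness argument is a sound variant of the paper's direct bound, which extends $\delta$ to an elementary path $(t_{in},\dots,t_{out})$ and closes it through $\overline N$.

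The genuine gap is in part 4), and it propagates to part 3). Your recursion rests entirely on the ``balance property'' (if one pre-place of $\tau\neq t_{in}$ has a marked $\hat N$-ancestor then all do), which you state, flag yourself as ``the main obstacle,'' and do not prove; the sketch you offer (tracking markings from $\mu_{cl}$, invoking safety and part 2) to exclude desynchronized build-up on the branches) is precisely the hard content and is left open. Since you deduce part 3) from part 4), neither of the last two claims is established. The paper avoids this difficulty by reversing your order: it proves part 3) \emph{directly} and then gets part 4) for free. For part 3) one fires a shutdown sequence $\mu\xrightarrow{\sigma_{sd}}\mu_{sd}$; by Remark \ref{rem_existencestructuredynamics} the restriction $(\overline N,\mu_{sd}|\overline N)$ is live, so a marking $\overline\mu_{post}$ reachable in it marks all places of $\overline{cl}:=\overline N\cap cl$. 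Because $\overline N\subset N$ is place-bordered, this extends (with the $\hat N$-part frozen at $\mu_{sd}|\hat N$) to a reachable marking $\mu_{post}$ of $(N,\mu_0)$; since $cl$ and $\overline{cl}$ have the same places, $\mu_{post}\geq\mu_{cl}$, and in a live and bounded system this forces $\mu_{post}=\mu_{cl}$, i.e. $\mu_{sd}|\hat N=0$. Part 4) is then immediate: $\sigma_{sd}$ avoids $t_{in}$ and, $\hat N$ being a $T$-net, the only transition able to consume the token on $p$ is $p^\bullet$; since all tokens are flushed, $p^\bullet$ fires, deposits a token on the next place of the path $(p,\dots,t)$, which must again be consumed, and so on until $t$ itself fires --- the prefix of $\sigma_{sd}$ before that firing is the desired $t_{in}$-free enabling sequence, and it restricts to $(\hat N,\hat\mu)$ because every transition of $\hat N_T\setminus\{t_{in}\}$ has all its pre-places in $\hat N$. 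Note that your balance property is, morally, a consequence of part 3); attempting to prove it before part 3) puts the cart before the horse. Your derivation of 3) from 4) is logically valid, but the whole chain hangs on the unproven lemma.
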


\begin{proof}

\vspace*{-6mm}
\begin{enumerate}
\item For an indirect proof assume the existence of a circuit $\gamma \subset \hat{N}$. Because $\hat{N} \subset N$ is \mbox{a $T$-net} and \mbox{is $cl$-adapted}, for each reachable marking $\mu$ of $(N,\mu_0)$:
$$\Vert \mu \Vert_\gamma=\Vert \mu_{cl} \Vert_\gamma=0$$
Hence each transition from $\gamma$ is dead in $(N,\mu_0)$, a contradiction to the liveness \mbox{of $(N,\mu_0)$}. The second claim follows because a non-elementary path has a node of self-intersection, and the latter produces a circuit.

\item The path $\delta \subset \hat{N}$ extends by concatenation to a path
$$\hat{\delta} =(t_{in},...,t_{out})$$
leading in $\hat{N}$ from the way-in transition $t_{in}$ to a way-out transition $t_{out}$. Due to \mbox{part 1)} the path~$\hat{\delta}$ is elementary. The complement $N \setminus \hat{N}$ is strongly connected. By concatenating $\hat{\delta}$ with an elementary path in the complement leading from $t_{out}$ to $t_{in}$ \mbox{extends $\hat{\delta}$} - and a posteriori also $\delta$ - to an elementary circuit $\delta_N \subset N$. The latter is contained in a $P$-component $C \subset N$, \cite[Analogue of Cor. 5.6]{TV1984}. Due to the fundamental property of perpetual free-choice systems
$$\Vert \mu \Vert_C=1,$$
which implies $\Vert \mu \Vert _\delta \leq 1$.

\item Due to Remark \ref{rem_existencestructuredynamics}, part 3) the free-choice system
$$(\overline{N},\overline{\mu}) \text{ with } \overline{N}:=N\setminus \hat{N} \text{ and } \overline{\mu}:=\mu_{sd}|\overline{N}$$
is live. There exists a reachable marking $\overline{\mu}_{post}$ of $(\overline{N},\overline{\mu})$ which marks all places \mbox{of $\overline{cl}:=\overline{N}\cap cl$}. Because $\overline{N} \subset N$ is place-bordered the extended marking $\mu_{post}$ of $N$ defined as
$$\mu_{post}|\overline{N}:=\overline{\mu}_{post} \text{ and } \mu_{post}|\hat{N} :=\mu_{sd}|\hat{N}$$
is reachable in $(N,\mu_0)$. Because $cl \subset N$ and $\overline{cl} \subset \overline{N}$ have the same places
$$\overline{\mu}_{post} \geq \mu_{\overline{cl}} \implies \mu_{post} \geq \mu_{cl} \implies \mu_{post} = \mu_{cl}$$
which implies
$$\mu_{sd}|\hat{N}= \mu_{post}|\hat{N}=0.$$
\item The claim follows from the previous part because $\hat{N}$ is a $T$-net and $t_{in} \notin \sigma_{sd}$.\qed
\end{enumerate}

\vspace*{-7mm}
\end{proof}

\begin{proposition}[Enabling equivalence and marking equality in adapted $CP$-subnets]\label{prop_markingequality}
Let $(N,\mu_0)$ be a perpetual free-choice system with regeneration cluster $cl \subset N$ and let
$$\hat{N} \subset N$$
be a $cl$-adapted $CP$-subnet. Consider a pair $(\mu_1,\mu_2)$ of reachable markings \mbox{of $(N,\mu_0)$} and assume that the restrictions
$$\hat{\mu}_j:=\mu_j|\hat{N},\ j=1,2,$$
are enabling equivalent, i.e.
$$en(\hat{N},\hat{\mu}_1) = en(\hat{N},\hat{\mu}_2).$$
Then:
\begin{enumerate}
\item \emph{Marking equality on $\hat{N}$}:
$$\hat{\mu}_1=\hat{\mu}_2.$$
\item \emph{Common shutdown sequence}:  Each shutdown sequence of $\hat{N}$ at $\mu_1$
$$\mu_1 \xrightarrow{\sigma} \mu_{1,sd}$$
is also also a shutdown sequence of $\hat{N}$ at $\mu_2$
$$\mu_2 \xrightarrow{\sigma} \mu_{2,sd}$$
\end{enumerate}
\end{proposition}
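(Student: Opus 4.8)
The plan is to prove part 1 by the indirect scheme of Proposition~\ref{prop_tsystemlucent}, with the $CP$-analogues of Proposition~\ref{prop_cpnetsubnetsperpetual} replacing the $T$-system lemmas, and to read off part 2 from part 1. I use throughout that $\hat N$ is a $T$-net which, being $cl$-adapted, is acyclic (Proposition~\ref{prop_cpnetsubnetsperpetual}, part 1); hence its unique way-in transition $t_{in}$ has no pre-place inside $\hat N$ and is the only source of $\hat N$, every other transition of $\hat N$ has all its pre-places in $\hat N$, and $(N,\mu_0)$ is safe. For part 1 I assume $\hat\mu_1\neq\hat\mu_2$ and put $D:=\{x\in\hat N_P:\hat\mu_1(x)\neq\hat\mu_2(x)\}$. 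I choose $p\in D$ \emph{minimal} for the (acyclic) flow order of $\hat N$, so that no place strictly upstream of $p$ lies in $D$; after possibly exchanging $\mu_1,\mu_2$ I may assume $\hat\mu_1(p)=1$ and $\hat\mu_2(p)=0$. Writing $t:=p^\bullet\in\hat N_T$ for the unique post-transition of $p$, the value $\hat\mu_2(p)=0$ gives $t\notin en(\hat N,\hat\mu_2)$, hence $t\notin en(\hat N,\hat\mu_1)$ by enabling equivalence; since $\hat\mu_1(p)=1$ this forces a second pre-place $q\in{}^\bullet t$ with $\hat\mu_1(q)=0$, exactly as in the $T$-system case.

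The heart of the argument is to produce, at the marking $\hat\mu_2$ where $p$ is empty, an enabled transition $\tau_p$ together with a token-free elementary path $\delta=(\tau_p,\dots,p)\subset\hat N$. I obtain it by tracing back from $p$ through the pre-transitions and unmarked pre-places of $\hat N$, which terminates because $\hat N$ is acyclic. The point of the minimal choice of $p$ is that every place strictly upstream of $p$ lies outside $D$, so $\hat\mu_1$ and $\hat\mu_2$ agree there; consequently the path $\delta$ is token-free at $\hat\mu_1$ as well, while $\hat\mu_1(p)=1$. Enabling equivalence carries $\tau_p\in en(\hat N,\hat\mu_2)$ to $\tau_p\in en(\hat N,\hat\mu_1)$, and $\tau_p\neq t$ because $t$ is not enabled at $\hat\mu_2$. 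If $\tau_p\neq t_{in}$, then all pre-places of $\tau_p$ lie in $\hat N$, so $\tau_p$ is enabled in $N$ at $\mu_1$; firing $\mu_1\xrightarrow{\tau_p}\mu'$ leaves the frozen token at $p$ (as $\tau_p\neq p^\bullet$) and deposits a further token on $\delta$, so $\Vert\mu'\Vert_\delta\geq 2$ for the reachable marking $\mu'$. This contradicts Proposition~\ref{prop_cpnetsubnetsperpetual}, part 2, and is exactly the standard contradiction (a reachable marking with a $\hat N$-path of token count $\geq 2$).

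The one step with no counterpart in the $T$-system proof, and which I expect to be the main obstacle, is the case $\tau_p=t_{in}$: it occurs precisely when the whole backward cone of $p$ in $\hat N$ is token-free, so that the only source feeding $p$ is the boundary transition $t_{in}$, which cannot be fired inside $N$ without tokens on its complement pre-places. Here I would exploit that $\hat N$ is transition-bordered, so that firing transitions of the strongly connected complement $\overline N=N\setminus\hat N$ never changes the restriction $\hat\mu$ (Remark~\ref{rem_existencestructuredynamics}); using the liveness of $(N,\mu_0)$ together with the internal-enabling property of Proposition~\ref{prop_cpnetsubnetsperpetual}, part 4, I would enable $t_{in}$ while keeping the frozen token at $p$ and the path $\delta$ token-free, and then fire $t_{in}$ to deposit the extra token on $\delta$, reaching again token count $\geq 2$. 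Controlling this boundary firing---equivalently, ruling out the isolated ``stuck token'' configuration by the $P$-component token-count invariant of the perpetual system---is the delicate point, since a single $P$-component cannot simultaneously mark $p$ and the complement pre-place of $t_{in}$.

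Finally, part 2 follows from part 1 with no further dynamics. Every transition of $\hat N_T\setminus\{t_{in}\}$ has all its pre-places in $\hat N$, because $t_{in}$ is the unique way-in transition; hence the enabledness of a shutdown sequence $\sigma$ (whose transitions avoid $t_{in}$) and its effect on $\hat N$ depend only on the restriction $\hat\mu$. As $\hat\mu_1=\hat\mu_2$ by part 1, the sequence $\mu_1\xrightarrow{\sigma}\mu_{1,sd}$ is also enabled at $\mu_2$, reaches a marking $\mu_{2,sd}$ with $\mu_{2,sd}|\hat N=\mu_{1,sd}|\hat N=0$ by Proposition~\ref{prop_cpnetsubnetsperpetual}, part 3, and enables no transition of $\hat N$ other than $t_{in}$; thus $\sigma$ is a shutdown sequence of $\hat N$ at $\mu_2$ as well.
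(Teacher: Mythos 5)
Your main case and part 2 are sound and follow essentially the paper's own strategy (indirect proof at a place $p$ where the restrictions differ, a token-free trace-back inside the acyclic $T$-net $\hat{N}$, enabling equivalence to transport the enabled start transition, and the contradiction $\Vert\mu'\Vert_{\delta'}\geq 2$ against Proposition~\ref{prop_cpnetsubnetsperpetual}, part 2). But the case $\tau_p=t_{in}$, which you rightly single out as the crux, is a genuine gap, and your sketched resolution cannot work: it is \emph{impossible} to enable $t_{in}$ while keeping the frozen token at $p$. Extend the token-free path $(t_{in},\dots,p)$ through $t=p^\bullet$ to a way-out transition and then through the strongly connected complement back to a pre-place $r\in{}^\bullet t_{in}\cap\overline{N}_P$; as in the paper's proof of Proposition~\ref{prop_cpnetsubnetsperpetual}, part 2, this yields an elementary circuit contained in a $P$-component $C$ with $\Vert\mu\Vert_C=1$ for every reachable marking. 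Hence whenever $p$ is marked, $r$ is empty and $t_{in}$ is disabled; any occurrence sequence enabling $t_{in}$ must first fire $t$ and consume the very token you need to keep. The $P$-component observation you make at the end of that paragraph is therefore not a tool for ``ruling out the stuck configuration'' but a refutation of your boundary-firing plan. Nor is the stuck configuration contradictory by itself --- it occurs, for instance, after a shutdown sequence --- so no invariant alone excludes it; enabling equivalence must be brought to bear on this case, and your sketch never invokes it there.

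The paper closes exactly this hole by a different choreography, in which the second pre-place $q$ --- which you introduce but then never use --- is essential. Since $\hat{\mu}_1(p)=1$ and $(p,t)\subset\hat{N}$ is a path, Proposition~\ref{prop_cpnetsubnetsperpetual}, part 4 gives an occurrence sequence of $(\hat{N},\hat{\mu}_1)$ avoiding $t_{in}$ that enables $t$; tracing back from $q$ within this sequence produces an enabled $\tau_q\neq t_{in}$ and a token-free path $\delta_q=(\tau_q,\dots,q,t)$ at $\hat{\mu}_1$. Enabling equivalence carries $\tau_q$ to $\hat{\mu}_2$, where $\tau_q\neq t_{in}$ has a marked pre-place in $\hat{N}$; prepending that place to $\delta_q$ gives a path from a marked place to $t$, so Proposition~\ref{prop_cpnetsubnetsperpetual}, part 4 applies \emph{again} at $\hat{\mu}_2$ and guarantees that $t$ can be enabled there without firing $t_{in}$. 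Consequently the trace-back from $p$ at $\hat{\mu}_2$ can be performed inside a sequence avoiding $t_{in}$, so $\tau_p\neq t_{in}$ is automatic and your problematic case never arises. Two further remarks: your minimality-of-$p$ device is clever but unnecessary, since the contradiction $\Vert\mu'\Vert_{\delta'}\geq 2$ needs only the frozen token at $p$ and the token deposited by firing $\tau_p$, irrespective of the interior marking of the path; and your part 2 agrees with the paper's argument and is correct as written.
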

The idea of the proof of the first statement is the same as for the proof of Proposition \ref{prop_tsystemlucent}. The role of the distinguished transition $t_{cl}$ is now taken by the way-in \mbox{transition $t_{in} \in \hat{N}$}. The argumentation is slightly different: The regeneration cluster $cl$ does not belong \mbox{to $\hat{N}$}, and $t_{in}$ has to be exempted from the transitions under consideration. For the convenience of the reader we therefore give a complete proof. The second result is a simple consequence of the first: A shutdown sequence fires only transitions from $\hat{N}_T\setminus \{t_{in}\}$.
\begin{proof}
Alike to the notation used in the proof of Proposition \ref{prop_perpetualtsystem} and \ref{prop_tsystemlucent} we introduce for a given pair
$$(q,t)\in \hat{N}_P\times{}\hat{N}_T \text{ with }  q ^\bullet=\{t\}$$
the notation
$$en_{sd}(\hat{N},\hat{\mu})_q:=\{(\tau,\delta):\ \tau \neq t_{in}, \tau \in en(\hat{N},\hat{\mu}), \delta=(\tau,...,q,t) \subset \hat{N}\}.$$
It denotes the set of pairs $(\tau,\delta)$ with $\tau\neq t_{in}$ enabled at $\hat{\mu}$ and starting the path $\delta \subset \hat{N}$ to $t$ via $q$.
\begin{enumerate}
\item For an indirect proof of the first part of the Proposition assume
$$\hat{\mu}_1 \neq \hat{\mu}_2$$
There exists a place $p \in \hat {N}_P$, marked \mbox{at $\hat{\mu}_1$} but unmarked \mbox{at $\hat{\mu}_2$}. Consider the well-determined transition $t \in \hat N_T$ with
$p^\bullet=\{t\}$, in particular $t\in \hat{N} \setminus \{t_{in}\}$. The \mbox{transition $t$} is not enabled at $\hat{\mu}_2$. By enabling equivalence
$$t \notin en(\hat{N},\hat{\mu}_2) \implies t \notin en(\hat{N},\hat{\mu}_1).$$
As a consequence $t$ has a second pre-place $q \in\ ^\bullet t$ which is unmarked \mbox{at $\hat{\mu}_1$}. W.l.o.g.
$$^\bullet t=\{p,q\}$$
and
$$(\mu_1(p),\mu_1(q))=(1,0) \text{ and } (\mu_2(p),\mu_2(q))=(0,*).$$
Here the value $\mu_2(q)\in \{0,1\}$ is not yet known. The indirect proof continues along the following steps:
\begin{itemize}
\item \emph{Triple ($\hat{\mu}_1, \tau_{1}, \delta_q$)}: Because
$$\hat{\mu}_1(p) =1 \text{ and } \hat{\mu}_1(q)=0$$
the transition $t$ can be enabled without firing $t_{in}$ by a reachable marking \mbox{of $(\hat{N},\hat{\mu}_1)$} due to \mbox{Proposition \ref{prop_cpnetsubnetsperpetual}}, part 4). Because $\hat{N}$ is a $T$-net Proposition \ref{prop_perpetualtsystem}, part 1) provides a pair
$$(\tau_q,\delta_q) \in en_{sd}(\hat{N},\hat{\mu}_1)_q \text{ with } \Vert \hat{\mu}_1 \Vert_{\delta_q}=0$$
see Figure \ref{fig_lucent}, left-hand side.

\item \emph{Pair ($\hat{\mu}_2,\tau_p)$}: By enabling equivalence
$$\tau_q \in en(\hat{N},\hat{\mu}_1) \implies \tau_q \in en(\hat{N},\hat{\mu}_2)$$

\item \emph{Triple ($\hat{\mu}_2,\tau_p,\delta_p$)}: The transition $\tau_q \in en(\hat{N},\hat{\mu}_2)$ has a pre-place marked \mbox{at $\hat{\mu}_2$}.

\noindent Proposition \ref{prop_cpnetsubnetsperpetual}, part 4), applied to the path $\delta_q$, shows that $t$ is enabled at a reachable marking of
$(\hat{N},\hat{\mu}_2)$. Because $\hat{\mu}_2(p)=0$ Proposition \ref{prop_perpetualtsystem}, part 1) provides a pair
$$(\tau_p,\delta_p) \in en_{sd}(\hat{N},\hat{\mu}_2)_p \text{ with } \Vert \hat{\mu}_2 \Vert_{\delta_p}=0,$$
see Figure \ref{fig_lucent}, middle.

\item \emph{Triple ($\hat{\mu}_1,\tau_p,\delta_p$)}: By enabling equivalence
$$\tau_p \in en(\hat{N},\hat{\mu}_2) \implies \tau_p \in en(\hat{N},\hat{\mu}_1)$$
Figure \ref{fig_lucent}, right-hand side shows $\hat{\mu}_1$. After firing $\hat{\mu}_1 \xrightarrow{\tau_p} \mu^\prime$ the segment
$$\delta^\prime :=(\tau_p,...,p) \text{ of } \delta_p \subset \hat{N}$$
has token count
$$\Vert \mu^\prime \Vert_{\delta^\prime}\geq 2,$$
but due to Proposition \ref{prop_cpnetsubnetsperpetual}, part 2)
$$\Vert \mu^\prime \Vert_{\delta^\prime}\leq 1.$$
The contradiction refutes the assumption of the indirect proof, hence
$$\hat{\mu}_1=\hat{\mu}_2$$
\end{itemize}
\item Due to part 1)
$$\mu_1|\hat{N}=\mu_2|\hat{N}.$$
Remark \ref{rem_existencestructuredynamics} implies: Each marking $\mu_j,\ j=1,2,$ enables a shutdown sequence \mbox{of $\hat{N}$}. Because a shutdown sequence has only transitions from $\hat{N}_T \setminus \{t_{in}\}$, each shutdown sequence enabled at $\mu_1$ is also a shutdown sequence enabled at $\mu_2$, and vice \mbox{ versa.\qed}
\end{enumerate}

\vspace*{-7mm}
\end{proof}

\section{Propagating perpetuality and enabling equivalence along \\ $CP$-exhaustions}\label{sect_propagating}

\begin{proposition}[Propagating perpetuality to the complement of an \mbox{adapted $CP$-subnet}]\label{prop_propagatingperpetuality}
Consider a perpetual free-choice \mbox{system $(N,\mu_0)$} with regeneration \mbox{cluster $cl \subset N$}, and a $cl$-adapted $CP$-subnet
$$\hat{N} \subset N \text{ with complement } \overline{N}:=N \setminus \hat{N}.$$
Then for each reachable marking $\mu$ of $(N,\mu_0)$ and for each shutdown \mbox{sequence $\sigma$}
$$\mu \xrightarrow{\sigma} \mu_{sd}$$
for $\hat{N} \subset N$: The free-choice system
$$(\overline{N},\overline{\mu}),\ \overline{\mu}:=\mu_{sd}|\overline{N},$$
is perpetual with regeneration cluster $\overline{cl}:=\overline{N} \cap cl$.
\end{proposition}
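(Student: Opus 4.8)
The plan is to verify the three defining properties of a perpetual free-choice system for $(\overline{N},\overline{\mu})$: that it is live and bounded, and that $\mu_{\overline{cl}}$ is a home marking, where $\overline{cl}:=\overline{N}\cap cl$. Liveness and boundedness are essentially already available from the general theory of $CP$-subnets (Remark \ref{rem_existencestructuredynamics}, part 3), which states that the shutdown restriction $(\overline{N},\mu_{sd}|\overline{N})$ is live and bounded, and even safe if $(N,\mu_0)$ is safe). Since a perpetual free-choice system is safe by the fundamental property, $(\overline{N},\overline{\mu})$ is live and safe, hence live and bounded, and $\overline{N}$ is well-formed. So the only genuine content is to show that $\mu_{\overline{cl}}$ is a home marking of $(\overline{N},\overline{\mu})$, i.e.\ reachable from every reachable marking of $(\overline{N},\overline{\mu})$.

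\emph{First} I would check that $\overline{cl}$ is a cluster of $\overline{N}$ and that $\mu_{\overline{cl}}=\mu_{cl}|\overline{N}$. Because $\hat N$ is $cl$-adapted, Remark \ref{rem_existencestructuredynamics}, part 2, gives $cl_P\subset\overline{N}_P$, so $cl$ and $\overline{cl}$ have the same places; thus $\mu_{\overline{cl}}$ is the characteristic function of $cl_P$ viewed on $\overline{N}$, matching the restriction of $\mu_{cl}$. \emph{Next}, the heart of the argument is a lifting/projection correspondence between occurrence sequences of $(\overline{N},\cdot)$ and occurrence sequences of $(N,\cdot)$. The key tool is the last bullet of Remark \ref{rem_existencestructuredynamics}, part 3: an occurrence sequence of transitions from $\overline{N}$ is enabled at a marking $\nu$ of $N$ iff it is enabled in the restriction $(\overline{N},\nu|\overline{N})$. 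This lets me transport firings back and forth between the two systems as long as I only fire transitions of $\overline{N}$, which is exactly what I may do since $\overline{N}\subset N$ is place-bordered and the tokens in $\hat N$ are frozen at $0$ after shutdown (Proposition \ref{prop_cpnetsubnetsperpetual}, part 3, gives $\mu_{sd}|\hat N=0$).

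\emph{The concrete step} is: given any reachable marking $\overline{\nu}$ of $(\overline{N},\overline{\mu})$, reached by some $\overline{\mu}\xrightarrow{\overline\rho}\overline\nu$ with all transitions in $\overline{N}_T$, lift $\overline\rho$ to an occurrence sequence of $(N,\mu_{sd})$ using the restriction criterion; this yields a reachable marking $\nu$ of $(N,\mu_0)$ with $\nu|\overline{N}=\overline\nu$ and $\nu|\hat N=0$. Since $\mu_{cl}$ is a home marking of the perpetual system $(N,\mu_0)$, there is an occurrence sequence $\nu\xrightarrow{\pi}\mu_{cl}$ in $N$. The decisive point is that $\pi$ can be taken to fire \emph{only} transitions of $\overline{N}$: the target $\mu_{cl}$ is empty on $\hat N$ and so is the source $\nu$ (on $\hat N$), and any excursion into $\hat N$ would have to fire the way-in transition $t_{in}$ to put a token into $\hat N$ and then flush it out again; I would argue such excursions can be removed because $\hat N$ has no circuits (Proposition \ref{prop_cpnetsubnetsperpetual}, part 1) so tokens entering $\hat N$ only move monotonically toward the way-out transitions, and the net token balance on $\hat N$ being zero at both ends forces these firings to cancel. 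Projecting the purified $\pi$ back via the restriction criterion gives $\overline\nu\xrightarrow{\overline\pi}\mu_{cl}|\overline{N}=\mu_{\overline{cl}}$ in $(\overline{N},\overline\mu)$, establishing that $\mu_{\overline{cl}}$ is a home marking.

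\emph{The main obstacle} I expect is precisely this purification of $\pi$: showing that the path from $\nu$ to $\mu_{cl}$ inside $N$ can be rerouted to avoid $\hat N$ entirely, or equivalently that the $\hat N$-firings in $\pi$ are inessential. The clean way is to observe that since both endpoints satisfy $(\cdot)|\hat N=0$ and $\hat N$ is acyclic with a single way-in transition $t_{in}$, the number of times $t_{in}$ fires in $\pi$ must equal the total token-flow out through the way-out transitions; a commutation/permutation argument on the occurrence sequence (firing the $\overline N$-transitions first, as is permissible by the free-choice structure and the frozen $\hat N$) lets me delete the matched $\hat N$-firings without affecting the restriction to $\overline N$. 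Once this is in place, boundedness and liveness being inherited from Remark \ref{rem_existencestructuredynamics} complete the verification that $(\overline{N},\overline{\mu})$ is perpetual with regeneration cluster $\overline{cl}$.
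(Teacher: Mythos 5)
Your setup is sound and matches the paper's: liveness and safeness of $(\overline N,\overline\mu)$ from Remark \ref{rem_existencestructuredynamics}, part 3, the identification of the places of $cl$ and $\overline{cl}$ via $cl$-adaptedness, the lifting of occurrence sequences along the place-bordered inclusion $\overline N\subset N$, and the use of $\mu_{sd}|\hat N=0$ from Proposition \ref{prop_cpnetsubnetsperpetual}. But the step you yourself flag as the main obstacle --- the ``purification'' of $\pi$ --- is a genuine gap, and the resolution you sketch fails. A matched excursion through $\hat N$ (one firing of $t_{in}$ plus the firings flushing the injected tokens out through way-out transitions) has zero net effect on $\hat N$ but a \emph{nonzero} net effect on $\overline N$: firing $t_{in}$ consumes tokens from $^\bullet t_{in}\cap\overline N_P$, and the way-out transitions deposit tokens on the way-in places $(\hat N_T)^\bullet\cap\overline N_P$. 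Deleting these firings therefore \emph{does} change the restriction to $\overline N$ --- the projected sequence would end at a marking differing from $\mu_{cl}|\overline N$ on exactly these border places --- and it also destroys enabledness of any later $\overline N$-transition in $\pi$ that consumes the tokens delivered by the way-out transitions, so the commutation ``fire the $\overline N$-transitions first'' is not available either. In general $\mu_{cl}$ may be reachable from $\nu$ only by sequences that route tokens through $\hat N$, so no occurrence-sequence surgery on an arbitrary $\pi$ can succeed.

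The paper avoids this entirely by never steering toward $\mu_{cl}$ along a modified $\pi$. Given a reachable marking $\overline\nu$ of $(\overline N,\overline\mu)$, it uses liveness of $(\overline N,\overline\nu)$ to reach a marking $\overline\nu_{post}$ enabling one transition of $\overline{cl}$ --- hence, by the free-choice property, marking \emph{all} places of $\overline{cl}$ --- then lifts $\overline\nu_{post}$ to a reachable marking $\nu_{post}$ of $(N,\mu_0)$ with $\nu_{post}|\hat N=\mu_{sd}|\hat N=0$. Since $cl$ and $\overline{cl}$ have the same places, $\nu_{post}\geq\mu_{cl}$, and the rigidity of live and bounded systems (comparable reachable markings are equal, Remark \ref{rem_concepts}) forces $\nu_{post}=\mu_{cl}$, so $\overline\nu_{post}=\mu_{\overline{cl}}$. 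The home-marking property thus comes for free from monotonicity, with no sequence surgery at all; if you replace your purification step by this argument, the rest of your proposal goes through unchanged.
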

\noindent Idea of the proof: Both clusters $cl\subset N$ and $\overline{cl} \subset \overline{N}$ have the same places. And each enabled occurrence sequence of
$(\overline{N},\overline{\mu})$ lifts to an enabled occurrence sequence \mbox{of $(N,\mu_{sd})$}.
\begin{proof}
Remark \ref{rem_existencestructuredynamics}, part 3) implies that $(\overline{N},\overline{\mu})$ is a live and safe free-choice system. We show that
$\mu_{cl}|\overline{N}$ is a regeneration marking of $(\overline{N},\overline{\mu})$:

\medskip\noindent For each reachable marking $\overline{\nu}$ of $(\overline{N},\overline{\mu})$ exists a reachable marking $\overline{\nu}_{post}$ of $(\overline{N},\overline{\nu})$ which enables at least one transition and a posteriori - due to the free-choice property of $\overline{N}$ - all transitions of $\overline{cl}$. \mbox{Hence $\overline{\nu}_{post}$} marks all places of $\overline{cl}$. Because $\overline{N} \subset N$ is place-bordered, $\overline{\nu}_{post}$ extends to a reachable marking $\nu_{post}$ of $(N,\mu_0)$ with
$$\nu_{post}|\overline{N}=\overline{\nu}_{post} \text{ and } \nu_{post}|\hat{N}=\mu_{sd}|\hat{N}$$
Proposition \ref{prop_cpnetsubnetsperpetual} implies
$$\nu_{post}|\hat{N}=\mu_{sd}|\hat{N}=0$$
The clusters
$$cl \subset N \text{ and } \overline{cl} \subset \overline{N}$$
have the same places. Hence
$$\nu_{post} \geq \mu_{cl} \implies \nu_{post}=\mu_{cl}$$
As a consequence $$\overline{\nu}_{post}=\mu_{cl}|\overline{N}=\mu_{\overline{cl}}$$
is a regeneration marking of $(\overline{N},\overline{\mu})$.\qed
\end{proof}

\begin{proposition}[Propagating enabling equivalence to the $T$-net of an \mbox{adapted $CP$-exhaustion}]\label{prop_propagatingenablingequivalence}
Let $(N,\mu_0)$ be a perpetual free-choice system with regeneration \mbox{cluster $cl \subset N$}. Consider a $cl$-adapted $CP$-exhaustion $(\hat{N}_i)_{i \in I}$ of $N$ with the final strongly connected $T$-net
$$\overline{N}:=N \setminus{} \dot\bigcup_{i \in I} \hat{N}_i.$$
Let $(\mu_1,\mu_2)$ be a pair of reachable markings of $(N,\mu_0)$ with
$$en(N,\mu_1)=en(N,\mu_2).$$
\begin{enumerate}
\item For each $i\in I$ the $CP$-subnet $\hat{N}_i \subset N$ has a common shutdown sequence $\sigma_i$ for both markings $\mu_1$ and $\mu_2$. Both markings enable the concatenation
$$\sigma:=\sigma_0*...*\sigma_n,$$
named a \emph{global shutdown sequence}.

\item For $j=1,2$ denote by $\mu_{j,sd}$ the marking of $N$ obtained by firing $\sigma$ at $\mu_j$, i.e.
$$\mu_j \xrightarrow{\sigma} \mu_{j,sd}$$
Then the restrictions to $\overline{N}$
$$\overline{\mu}_{j,sd}:=\mu_{j,sd}|\overline{N}$$
satisfy
$$\overline{\mu}_{1,sd}-\overline{\mu}_{2,sd}=(\mu_1-\mu_2)|\overline{N}.$$
\item The pair of markings
$$(\overline{\mu}_{1,sd},\overline{\mu}_{2,sd})$$
is reachable in the perpetual $T$-system
$$(\overline{N},\mu_{\overline{cl}}) \text{ with } \overline{cl}:=\overline{N} \cap cl$$
and satisfies
$$en(\overline{N},\overline{\mu}_{1,sd})=en(\overline{N},\overline{\mu}_{2,sd}).$$
\end{enumerate}
\end{proposition}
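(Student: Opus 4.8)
The plan is to process the members $\hat N_0,\dots,\hat N_n$ of the exhaustion one at a time, in the given order, running an induction whose single step is one application of Proposition \ref{prop_markingequality} inside the current complement. The engine of the whole argument is a structural observation I would isolate first. Since each $\hat N_i$ is a CP-subnet of $N$ itself (Proposition \ref{prop_iterationcpsubnet}), it is transition-bordered with respect to $N$; hence a transition fired inside one $\hat N_i$ can neither consume from nor deposit into any other member $\hat N_m$ with $m\neq i$, and every way-out place of $\hat N_i$ must lie in the final $T$-net $\overline N$ (its pre-transition sits in $\hat N_i$, so transition-borderedness would otherwise force that pre-transition into the $\hat N_m$ containing the place). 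Consequently the shutdown of $\hat N_0,\dots,\hat N_{i-1}$ leaves the marking on $\hat N_i$ completely untouched and dumps all produced tokens onto way-out places inside $\overline N$.

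For part 1 I would induct on $i$, maintaining that $\overline N_{i-1}:=N\setminus(\hat N_0\dot\cup\dots\dot\cup\hat N_{i-1})$ carries a perpetual free-choice structure with regeneration cluster $cl\cap\overline N_{i-1}$ (by iterating Proposition \ref{prop_propagatingperpetuality}) and that the two running markings restrict on $\hat N_i$ to $\mu_1|\hat N_i$ and $\mu_2|\hat N_i$. By the observation these are the original restrictions; since a non-way-in transition of $\hat N_i$ has all its pre-places inside $\hat N_i$ and the way-in transition is vacuously enabled in $\hat N_i$, the hypothesis $en(N,\mu_1)=en(N,\mu_2)$ transfers verbatim to $en(\hat N_i,\mu_1|\hat N_i)=en(\hat N_i,\mu_2|\hat N_i)$. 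Proposition \ref{prop_markingequality} then yields a common shutdown sequence $\sigma_i$ and, as a by-product, $\mu_1|\hat N_i=\mu_2|\hat N_i$. Firing $\sigma_i$ from both markings preserves the inductive hypotheses (the still-unprocessed subnets keep their markings), and the concatenation $\sigma:=\sigma_0*\dots*\sigma_n$ is enabled at both $\mu_1$ and $\mu_2$ because enabledness of complement transitions may be tested in the restriction (Remark \ref{rem_existencestructuredynamics}, part 3). Part 2 is then immediate: firing one and the same sequence $\sigma$ changes any marking by the same displacement, so $\mu_{1,sd}-\mu_{2,sd}=\mu_1-\mu_2$ on all of $N$, and restricting to $\overline N$ gives the claim; note the step above also delivers $\mu_1|\hat N=\mu_2|\hat N$, so the two markings differ only on $\overline N$.

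For part 3, perpetuality of $(\overline N,\mu_{\overline{cl}})$ as a $T$-system comes from the $(n+1)$-fold iteration of Proposition \ref{prop_propagatingperpetuality}, and reachability of the pair $(\overline\mu_{1,sd},\overline\mu_{2,sd})$ is a routine consequence of $\mu_{\overline{cl}}$ being a home (regeneration) marking. The delicate point — and the step I expect to be the main obstacle — is the enabling equivalence $en(\overline N,\overline\mu_{1,sd})=en(\overline N,\overline\mu_{2,sd})$. Because the shutdown only adds tokens to $\overline N$, and only at way-out places, one has $\overline\mu_{j,sd}=\mu_j|\overline N+\Delta$ with the same $\Delta\ge 0$ supported on way-out places; safety of the perpetual $T$-system forces $\mu_1$ and $\mu_2$ to vanish on $\mathrm{supp}\,\Delta$, so the two restrictions $\mu_1|\overline N,\mu_2|\overline N$ — already enabling equivalent, inherited from $en(N,\mu_1)=en(N,\mu_2)$ — agree there. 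Hence every transition without a way-out pre-place keeps its common enabling status, and only the critical transitions can a priori be switched on asymmetrically by $\Delta$.

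These critical transitions I would handle by the indirect, token-count contradiction flagged after Proposition \ref{prop_perpetualtsystem}. Assuming a critical $t$ is enabled at $\overline\mu_{1,sd}$ but not at $\overline\mu_{2,sd}$, part 2 together with safety locates a pre-place $q$ with $\mu_1(q)=1$ and $\mu_2(q)=0$; I would then use Proposition \ref{prop_perpetualtsystem}, part 1, to obtain an enabled transition together with a token-free elementary path to $t$, forward the missing token along it, and thereby exhibit an elementary path avoiding $t_{\overline{cl}}$ with token count $\ge 2$, contradicting Proposition \ref{prop_perpetualtsystem}, part 2 ii). The genuinely delicate core here is the transfer of enabledness of the forwarding transition between $\overline\mu_{1,sd}$ and $\overline\mu_{2,sd}$ without already assuming what is to be proved; this is exactly where the displacement identity of part 2 and the safety of $\overline N$ must be combined carefully. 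Once the enabling equivalence is secured, it is this statement that later feeds Proposition \ref{prop_tsystemlucent} to force $\overline\mu_{1,sd}=\overline\mu_{2,sd}$, but that final deduction belongs to the proof of Theorem \ref{theor_vanaalsttheorem} rather than to the present proposition.
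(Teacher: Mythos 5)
Your parts 1 and 2 are sound and essentially coincide with the paper's argument: your transition-borderedness observation is the paper's remark that the pre-sets of the non-way-in transitions of distinct $\hat N_i$, $\hat N_k$ are disjoint (so the shutdown sequences do not interfere and concatenate), the displacement of a fixed firing sequence is independent of the start marking, and your reduction of part 3 to the critical transitions --- via the fact that $\sigma$ changes $\overline N$ only on the way-in places $\overline N_{P,in}$, together with safety --- is exactly the paper's step i). Perpetuality via iterating Proposition \ref{prop_propagatingperpetuality} and reachability via cyclicity of live $T$-systems also match. (Two small notational points: the paper calls the places in $\mathrm{supp}\,\Delta$ \emph{way-in places} of $\overline N$; and to get a path avoiding $t_{\overline{cl}}$ you must invoke Proposition \ref{prop_perpetualtsystem}, part 2 i) --- which uses the frozen token at the marked pre-place and the greediness of $\overline{cl}$ --- not merely part 1.)

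For the critical transitions, however, you have only set up the paper's argument, not completed it, and the step you flag as ``delicate'' is a genuine gap rather than a detail to be checked. Run your own sketch to its end: from a critical $t$ enabled at $\overline\mu_{1,sd}$ but not at $\overline\mu_{2,sd}$, Proposition \ref{prop_perpetualtsystem}, part 2 i) applied at $\overline\mu_{2,sd}$ yields $(\tau,\delta)$ with $\tau\in en(\overline N,\overline\mu_{2,sd})$, $\Vert\overline\mu_{2,sd}\Vert_\delta=0$ and $t_{\overline{cl}}\notin(\tau,\dots,q)$; the token-count contradiction of part 2 ii) then shows only that $\tau\notin en(\overline N,\overline\mu_{1,sd})$. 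That is not a contradiction. Monotonicity of the shutdown on $\overline N$ gives $\tau\notin en(\overline N,\mu_1|\overline N)$, and the hypothesis $en(N,\mu_1)=en(N,\mu_2)$ gives $\tau\notin en(\overline N,\mu_2|\overline N)$, so $\tau$ was switched on by the shutdown: it is itself a critical transition enabled asymmetrically, with the roles of the two markings \emph{swapped}. No combination of the displacement identity of part 2 with safety closes this loop --- the situation after one step is formally identical to the one before it, with $(\overline\mu_{1,sd},\overline\mu_{2,sd})$ interchanged. The paper's missing idea is an iteration: repeating the construction produces critical transitions $t_1,t_2,\dots$ joined by elementary paths $\delta_k=(t_{k+1},\dots,q_k,t_k)$ avoiding $t_{\overline{cl}}$, whose token counts alternate between $0$ and $1$ at the two shutdown markings; since $\overline N_{T,crit}$ is finite, a pigeonhole argument closes a subfamily of these paths into a circuit $\gamma$ with $t_{\overline{cl}}\notin\gamma$ and $\Vert\mu_{\overline{cl}}\Vert_\gamma\geq 1$, and a marked elementary subcircuit of $\gamma$ avoiding $t_{\overline{cl}}$ contradicts the fundamental property of perpetual $T$-systems. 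Without this chain-and-circuit argument your proof of part 3 stops exactly where the real difficulty begins.
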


\noindent The idea of the proof is to compare the enabledness of each transition $t \in \overline{N}$ before and after firing $\sigma$. The proof shows: A transition $t$ is enabled \textit{before} firing $\sigma$ at both \mbox{markings $(\mu_1,\mu_2)$} or at none of them if and only \textit{after} firing $\sigma$ the transition is enabled at both markings $(\mu_{1,sd},\mu_{2,sd})$ or at none of them. Besides \mbox{Proposition \ref{prop_perpetualtsystem}} and \ref{prop_markingequality} the main ingredient is the fact that a live and \mbox{bounded $T$-system} is cyclic.

\begin{proof}
We set
$$\overline{\mu}_j:=\mu_j|\overline{N},\ j=1,2.$$
Because $\overline{N} \subset N$ is place-bordered: For each transition $t \in \overline{N}$ and for $j=1,2$ holds
$$t \in en(N,\mu_j) \iff t \in en(\overline{N},\overline{\mu}_j)$$
\begin{enumerate}
\item For each $i \in I$ Proposition \ref{prop_markingequality} provides a common shutdown \mbox{sequence $\sigma_i$} of $\hat{N}_i$ with respect to the markings $\mu_1$ and $\mu_2$. For each pair $i \neq k \in I$ the pre-sets of the non way-in transitions of $\hat{N}_i$ and $\hat{N}_k$ are disjoint, which proves part 1).

\item If $p\in \overline{N} \subset N$ is not a way-in place of $\overline{N}$ then for $j=1,2$
$$\mu_k(p)=\mu_{k,sd}(p).$$
And for a way-in place $p\in \overline{N}_{P,in}$ the change $\mu_{k,sd}(p)-\mu_k(p)$ of both markings depends only on the transitions of $\sigma$. Hence the change is the same whether \mbox{firing $\sigma$} at $\mu_1$ or at $\mu_2$.
\item For $j=1,2$ Proposition \ref{prop_propagatingperpetuality} implies that $(\overline{N},\overline{\mu}_{j,sd})$ is perpetual with regeneration marking
$\mu_{\overline{cl}}$. Live $T$-systems are cyclic, hence $\overline{\mu}_{j,sd}$ is reachable in $(\overline{N},\mu_{\overline{cl}})$. Claim:
$$en(\overline{N},\overline{\mu}_{1,sd})=en(\overline{N},\overline{\mu}_{2,sd})$$
\eject
The proof distinguishes between non-critical transitions and critical transitions.
\begin{flushleft}
i) \emph{Non-critical transitions}: If $t \in \overline{N}_T\setminus \overline{N}_{T,crit}$ then
$$^\bullet t \cap \overline{N}_{P,in}=\emptyset.$$
Hence firing $\sigma$ does not change the marking on $^\bullet t$. As a consequence
\end{flushleft}
$$\left [en(N,\mu_1)=en(N,\mu_2)\right] \implies \left[t \in en(\overline{N},\overline{\mu}_{1,sd}) \iff t \in en(\overline{N},\overline{\mu}_{2,sd})\right]$$
ii) \emph{Critical transitions}: The proof of the claim is indirect. Assume the existence of a critical transition $t \in \overline{N}_{T,crit}$ which violates enabling equivalence, w.l.o.g.
$$t\in en(\overline{N},\overline{\mu}_{1,sd}) \setminus en(\overline{N},\overline{\mu}_{2,sd})$$
Then $t$ is enabled at neither marking $\mu_1$ and $\mu_2$, and firing $\sigma$ at $\mu_1$ enables $t$, but firing $\sigma$ at $\mu_2$ does not. \mbox{Hence $t$} has at least two pre-places, w.l.o.g. $t$ has exactly two pre-places
$$^\bullet t=\{p,q\} \text{ with } p \in \overline{N}_{P,in}$$
satisfying
$$(\mu_1(p),\mu_1(q))=(0,1) \text{ and } (\mu_2(p),\mu_2(q))=(0,0)$$
$$(\overline{\mu}_{1,sd}(p),\overline{\mu}_{1,sd}(q))=(1,1) \text{ and } (\overline{\mu}_{2,sd}(p),\overline{\mu}_{2,sd}(q))=(1,0).$$
\begin{itemize}
\item Proposition \ref{prop_perpetualtsystem}, part 2 i), applied to the perpetual $T$-system $(\overline{N},\mu_{\overline{cl}})$ and its reachable \mbox{marking $\overline{\mu}_{2,sd}$}, provides a pair
$$(\tau,\delta) \in en(\overline{N}, \overline{\mu}_{2,sd})_q,\ \tau \neq t, \text{ with } \Vert \overline{\mu}_{2,sd} \Vert_\delta =0
\text{ and } t_{\overline{cl}} \notin \delta^\prime$$
for the segment
$$\delta^\prime:=(\tau,...,q) \text{ of }\delta.$$
In particular $\tau \in en(\overline{N},\overline{\mu}_{2,sd})$.

\item The transition $\tau$ satisfies
$$\tau \notin en(\overline{N}, \overline{\mu}_{1,sd}):$$
Otherwise, after firing $\overline{\mu}_{1,sd} \xrightarrow{\tau} \mu^\prime$ the path $\delta^\prime$ has token count
$$\Vert \mu^\prime \Vert_{\delta^\prime}\geq 2$$
because $\overline{\mu}_{1,sd}(q)=1$. Then Proposition \ref{prop_perpetualtsystem}, part 2 ii) implies the contradiction
$$\Vert \mu^\prime \Vert_{\delta^\prime}\leq 1$$
\item By the previous step and by the assumed enabling equivalence
$$\tau \notin en(\overline{N}, \overline{\mu}_{1,sd})\implies \tau \notin en(\overline{N},\overline{\mu}_1) \implies \tau \notin en(\overline{N},\overline{\mu}_2)$$
As a consequence
$$\tau \in en(\overline{N},\overline{\mu}_{2,sd}) \setminus en(\overline{N},\overline{\mu}_2),$$
and the transition $\tau$ becomes enabled at a reachable marking of $(\overline{N},\overline{\mu}_2)$ not until firing a way-out transition $t_{out} \in \hat{N}$. Hence $\tau \in \overline{N}_{T,crit}$ is a further critical transition with
$$\tau \in en(\overline{N},\overline{\mu}_{2,sd}) \setminus en(\overline{N},\overline{\mu}_{1,sd}).$$
\item The previous result implies that at least one pre-place $\tau$ is marked at $\overline{\mu}_{1,sd}$ due to the firing of $t_{out}$. As a consequence
$$t_{\overline{cl}} \neq t:$$
Otherwise the enabling $t \in en(\overline{N},\overline{\mu}_{1,sd})$, the greediness of $\overline{cl}$, and the fact, that places of \mbox{a $T$-net} do not branch, imply $t=\tau$, which has been excluded above.
\end{itemize}

\noindent We now iterate the whole argument above: It derives from the critical transition
$$t_1:=t \in \overline{N}_{T,crit} \cap \left(en(\overline{N},\overline{\mu}_{1,sd}) \setminus en(\overline{N},\overline{\mu}_{2,sd})\right)$$
a second critical transition
$$t_2:=\tau \in \overline{N}_{T,crit} \cap \left(en(\overline{N},\overline{\mu}_{2,sd}) \setminus en(\overline{N},\overline{\mu}_{1,sd})\right)$$
and a path
$$\delta_1:=\delta=(t_2,...,q_1,t_1) \subset \overline{N} \text{ with } \Vert \overline{\mu}_{1,sd} \Vert_{\delta_1}=1,\ \Vert \overline{\mu}_{2,sd} \Vert_{\delta_1}=0
\text{ and } t_{\overline{cl}} \notin \delta_1$$
see Figure \ref{fig_enablingcriticaltransitions}.

\medskip After finitely many steps we obtain a family of critical transitions
$$t_k \in \overline{N}_{T,crit}, \ k=1,...,m,$$
and elementary paths $$\delta_k=(t_{k+1},...,q_k,t_k) \subset \overline{N},\ k=1,...,m-1,\ t_{cl} \notin \delta_k,$$
satisfying
$$\Vert \overline{\mu}_{1,sd} \Vert_{\delta_k}=\begin{cases}
1 & \text{if } k \text{ odd} \\
0 & \text{if } k \text{ even} \end{cases},\ \Vert \overline{\mu}_{2,sd} \Vert_{\delta_k}=\begin{cases}
1 & \text{if } k \text{ even} \\
0 & \text{if } k \text{ odd} \end{cases}$$
Because $\overline{N}$ has only finitely many critical transitions, a subset of these paths concatenates and induces a circuit
$$\gamma \subset \overline{N} \text{ with } t_{\overline{cl}} \notin \gamma.$$
The circuit $\gamma$ satisfies
$$\Vert \mu_{\overline{cl}} \Vert_\gamma= \Vert \overline{\mu}_{1,sd} \Vert_\gamma= \Vert \overline{\mu}_{2,sd} \Vert_\gamma \geq 1$$
Hence $\gamma$ has at least one elementary subcircuit which is marked at $\mu_{\overline{cl}}$. The result contradicts the fundamental property of perpetual $T$-systems because the subcircuit does not contain $t_{\overline{cl}}$.

\begin{flushleft}
The contradiction refutes the assumption that the critical transition $t=t_1$ violates enabling equivalence. Hence all critical transitions $t \in \overline{N}_{T,crit}$ satisfy
\end{flushleft}
$$t \in en(\overline{N},\overline{\mu}_{1,sd}) \iff t \in en(\overline{N},\overline{\mu}_{2,sd})$$

\noindent iii) \emph{Enabling equivalence}: The two previous parts show
$$en(\overline{N},\overline{\mu}_{1,sd}) = en(\overline{N},\overline{\mu}_{2,sd}),$$
which finishes the proof of the Proposition.\qed
\end{enumerate}

\vspace*{-7mm}
\end{proof}

\begin{figure}[!h]
\vspace*{-2mm}
\centering
\includegraphics[width=10cm]{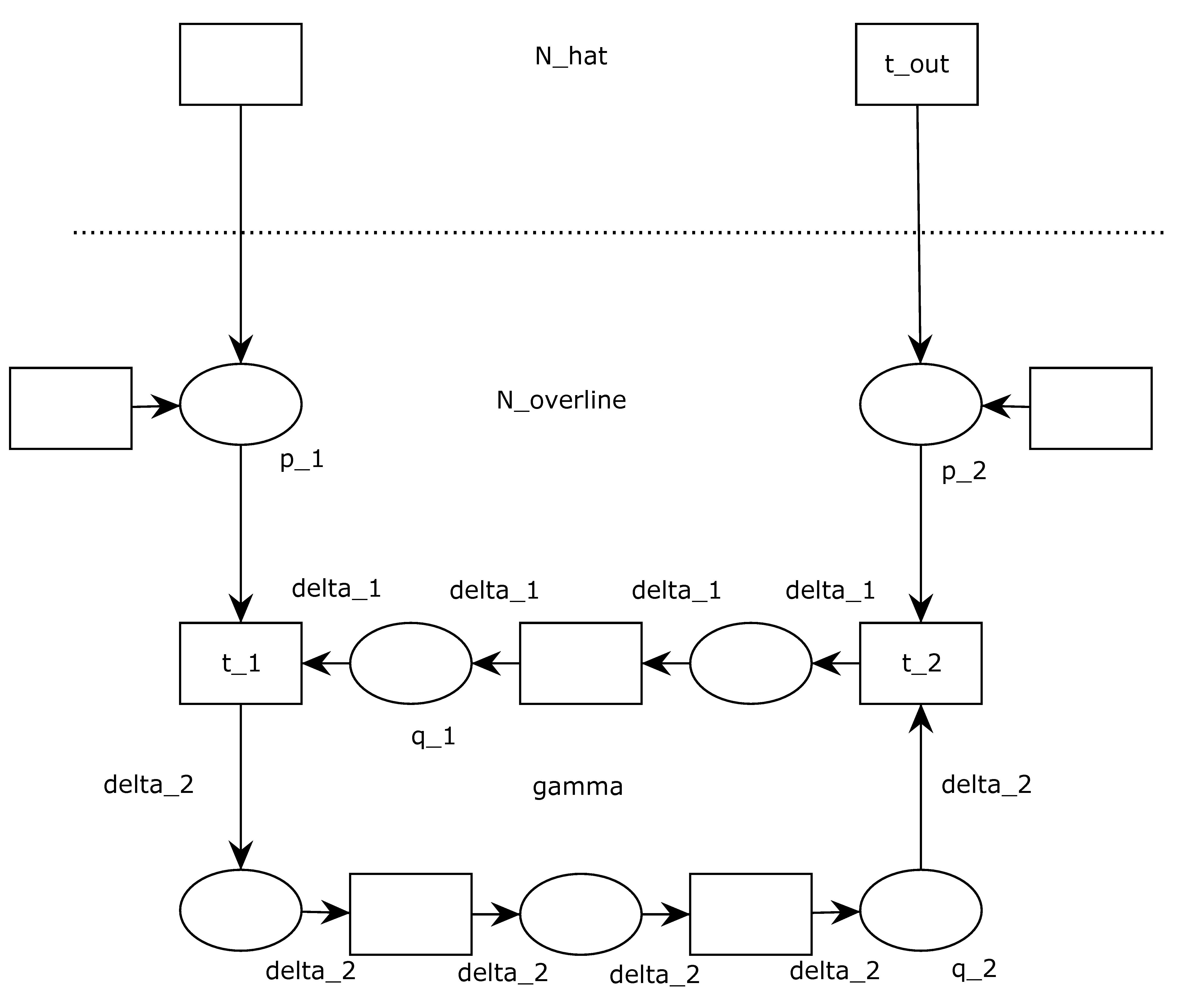}\vspace*{-2mm}
\caption{Way-in places $p_j$, critical transitions $t_j$, paths $\delta_j$, particular case $t_3=t_1$}
\label{fig_enablingcriticaltransitions}\vspace*{-3mm}
\end{figure}

\section{Statement and proof of van der Aalst's theorem}\label{sect_proofvanderaalsttheorem}
\begin{theorem}[Van der Aalst's theorem on lucency from \cite{Aal2018}]\label{theor_vanaalsttheorem}
Each perpetual free-choice system is lucent.
\end{theorem}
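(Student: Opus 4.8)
The plan is to obtain the theorem as the confluence of the five supporting results, with the adapted $CP$-exhaustion as the organising device; the theorem itself is an assembly rather than a source of new difficulty. Fix a perpetual free-choice system $(N,\mu_0)$ with regeneration cluster $cl\subset N$ and a pair $(\mu_1,\mu_2)$ of reachable markings with $en(N,\mu_1)=en(N,\mu_2)$; the goal is $\mu_1=\mu_2$. First I would apply Theorem \ref{theor_existencecpexhaustion} to produce a $cl$-adapted $CP$-exhaustion $(\hat{N}_i)_{i\in I}$ whose complement $\overline{N}:=N\setminus\dot\bigcup_{i\in I}\hat{N}_i$ is a strongly connected $T$-net. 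Since $N_{exh}=\overline{N}\ \dot\cup\ \dot\bigcup_{i\in I}\hat{N}_i$ carries the same nodes as $N$, the place set splits disjointly as $N_P=\overline{N}_P\ \dot\cup\ \dot\bigcup_{i\in I}\hat{N}_{i,P}$, so it suffices to establish $\mu_1=\mu_2$ separately on each $\hat{N}_{i,P}$ and on $\overline{N}_P$.

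For the $CP$-subnets I would argue one $\hat{N}_i$ at a time. By the construction of Theorem \ref{theor_existencecpexhaustion} together with Proposition \ref{prop_iterationcpsubnet}, each $\hat{N}_i$ is a $cl$-adapted $CP$-subnet of $N$. Because $\hat{N}_i$ is transition-bordered, every transition $t\in\hat{N}_{i,T}$ other than the way-in transition has all its pre-places inside $\hat{N}_i$; hence $en(N,\mu_1)=en(N,\mu_2)$ restricts to enabling equivalence of the restrictions $\hat{\mu}_{j,i}:=\mu_j|\hat{N}_i$ (this is exactly the information Proposition \ref{prop_markingequality} uses, since its proof works with transitions $\tau\neq t_{in}$). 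Proposition \ref{prop_markingequality}, part 1, then yields $\hat{\mu}_{1,i}=\hat{\mu}_{2,i}$, that is $\mu_1=\mu_2$ on each $\hat{N}_{i,P}$.

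The remaining equality on $\overline{N}$ is where the propagation machinery does the work. I would invoke Proposition \ref{prop_propagatingenablingequivalence}: it furnishes a global shutdown sequence $\sigma=\sigma_0*\dots*\sigma_n$ enabled at both $\mu_1$ and $\mu_2$, and the markings $\mu_{j,sd}$ obtained by firing $\sigma$ have restrictions $\overline{\mu}_{j,sd}:=\mu_{j,sd}|\overline{N}$ that are reachable in the perpetual $T$-system $(\overline{N},\mu_{\overline{cl}})$, with $\overline{cl}:=\overline{N}\cap cl$, and that still satisfy $en(\overline{N},\overline{\mu}_{1,sd})=en(\overline{N},\overline{\mu}_{2,sd})$; the perpetuality here is the iterated content of Proposition \ref{prop_propagatingperpetuality}. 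Now Proposition \ref{prop_tsystemlucent}, that perpetual $T$-systems are lucent, gives $\overline{\mu}_{1,sd}=\overline{\mu}_{2,sd}$, and part 2 of Proposition \ref{prop_propagatingenablingequivalence} records $\overline{\mu}_{1,sd}-\overline{\mu}_{2,sd}=(\mu_1-\mu_2)|\overline{N}$. Hence $(\mu_1-\mu_2)|\overline{N}=0$, so $\mu_1=\mu_2$ on $\overline{N}_P$, and combining with the previous paragraph yields $\mu_1=\mu_2$ on all of $N_P$.

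I do not expect a genuine obstacle inside this final proof: every hard estimate has already been discharged upstream, namely the token-count bounds of Propositions \ref{prop_perpetualtsystem} and \ref{prop_cpnetsubnetsperpetual} and, above all, the critical-transition chain argument driving Proposition \ref{prop_propagatingenablingequivalence}, which is the true technical heart of the paper. The only step in the assembly deserving care is the descent of enabling equivalence — checking that $en(N,\mu_1)=en(N,\mu_2)$ really restricts to each $\hat{N}_i$ and, after the global shutdown, to $\overline{N}$ — and this is precisely what transition-borderedness together with Propositions \ref{prop_markingequality} and \ref{prop_propagatingenablingequivalence} guarantee.
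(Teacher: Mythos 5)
Your proposal is correct and follows essentially the same route as the paper's proof: a $cl$-adapted $CP$-exhaustion from Theorem \ref{theor_existencecpexhaustion}, marking equality on each $\hat{N}_i$ via Proposition \ref{prop_markingequality}, then equality on the final $T$-net $\overline{N}$ via the global shutdown sequence of Proposition \ref{prop_propagatingenablingequivalence}, perpetuality of $(\overline{N},\mu_{\overline{cl}})$ from Proposition \ref{prop_propagatingperpetuality}, and lucency of perpetual $T$-systems (Proposition \ref{prop_tsystemlucent}). Your explicit verification that $en(N,\mu_1)=en(N,\mu_2)$ descends to each $\hat{N}_i$ --- using that every transition of $\hat{N}_i$ except the unique way-in transition has its full pre-set inside $\hat{N}_i$, which is all the proof of Proposition \ref{prop_markingequality} needs --- is a detail the paper leaves implicit, and you handle it correctly.
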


\begin{proof} Consider a perpetual free-choice system $(N,\mu_0)$ with a regeneration \mbox{cluster $cl \subset N$} and its regeneration marking $\mu_{cl}$.
\begin{enumerate}
\item \emph{$CP$-exhaustion}: Theorem \ref{theor_existencecpexhaustion} provides a $cl$-adapted
$CP$-exhaustion of $N$
$${(\hat{N}_i)}_{i \in I},\ I=\{0,...,n\} \subset \mathbb{N},$$
with $CP$-subnets $\hat{N}_i \subset N$, complements
$$\overline{N}_i:=\overline{N}_{i-1}\setminus \hat{N}_i,\ \overline{N}_{-1}:=N,$$
and the final strongly connected $T$-net
$$\overline{N}:=N\setminus \dot\bigcup_{i\in N} \hat{N_i}$$
The disjoint union of full subnets of $N$
$$N_{exh}:= \overline{N}\ \dot\cup\ \hat{N}_0\ \dot\cup \ ...\ \dot\cup \ \hat{N}_n$$
is a subnet of $N$ with the same nodes as $N$. The regeneration \mbox{cluster $cl$} intersects each complement
$\overline{N}_i,\ i \in I,$ and $\overline{N}$ in a non-empty cluster. Set
$$\overline{cl}:= \overline{N} \cap cl \subset \overline{N}.$$
\end{enumerate}\bigskip
\begin{flushleft}
To continue the proof assume a pair of reachable markings $(\mu_1,\mu_2)$ of $(N,\mu_0)$ with
\end{flushleft}
$$en(N,\mu_1)=en(N,\mu_2).$$
\begin{enumerate}[start=2]
\item \emph{Marking equality in the $CP$-subnets}:
Proposition \ref{prop_markingequality} implies for \mbox{each $i \in I$}
$$(\mu_1-\mu_2)|\hat{N}_i=0$$
\item \emph{Marking equality in the final $T$-net}: Proposition \ref{prop_propagatingenablingequivalence} considers simultaneously the collection of all $CP$-subnets  $\hat{N}_i,\ i \in I$. The proposition provides a global shutdown sequence $\sigma$ enabled at both markings $\mu_j,\ j=1,2,$
$$\mu_j \xrightarrow{\sigma} \mu_{j,sd}$$
such that the resulting markings of $\overline{N}$
$$\overline{\mu}_{j,sd}:=\mu_{j,sd}|\overline{N}$$
\begin{itemize}
\item are reachable in the $T$-system $(\overline{N},\mu_{\overline{cl}})$,
\item satisfy
$$\overline{\mu}_{1,sd}-\overline{\mu}_{2,sd}=(\mu_1-\mu_2)|\overline{N}$$
\item and are enabling equivalent
$$en(\overline{N},\overline{\mu}_{1,sd})=en(\overline{N},\overline{\mu}_{2,sd}).$$
\end{itemize}

\noindent Proposition \ref{prop_propagatingperpetuality} implies that $(\overline{N},\mu_{\overline{cl})})$ is perpetual. Proposition \ref{prop_tsystemlucent} concludes
$$\overline{\mu}_{1,sd}-\overline{\mu}_{2,sd}=0.$$
As a consequence
$$(\mu_{1} - \mu_{2})|\overline{N}=0$$
Combining part 2) and 3) of the proof shows
$$\mu_1=\mu_2$$
and finishes the proof.\qed
\end{enumerate}

\vspace*{-6mm}
\end{proof}

\section{Concluding remarks}\label{sect_finalremarks}

$CP$-subnets of a free-choice net have been introduced by Desel and Esparza. In \cite{Wehler2010} we used $CP$-nets to show the theorem of Gaujal, Haar and Mairesse about the existence of unique blocking markings in live and bounded free-choice systems.

The present proof of Theorem \ref{theor_vanaalsttheorem} does not presuppose the blocking theorem but it makes a similar use of $CP$-nets. As van der Aalst remarks, for a perpetual free-choice system both his theorem and the uniqueness part of the blocking theorem give the same result when applied to those reachable markings, which enable the transitions of one single common cluster but no other transitions.\bigskip

Our proof uses in an essential way the existence of adapted $CP$-exhaustions for well-formed free-choice nets. Figure \ref{fig_planofproof} visualizes the logical structure of the proof. The figure indicates the results referring to the building blocks of \mbox{the $CP$-exhaustion} and their relations. If the underlying net lacks well-formedness the proof does not apply. Therefore it would be interesting to isolate those consequences of regeneration clusters \mbox{in $T$-systems} which underly the constructions from Section \ref{sect_tsystemsperpetualitymarkings} and notably Figure \ref{fig_lucent}:
\begin{itemize}
\item Marking an unmarked pre-place of a transition by forwarding tokens along an elementary, initially token-free path, which starts at an enabled transition.
\item Markings which are enabling equivalent but distinct create a distinguished elementary path: The path has token count at least two and avoids the regeneration cluster.
\item The greediness of the regeneration cluster ensures the safeness of each elementary path which avoids the regeneration cluster.
\end{itemize}
How do these properties generalize in a direct manner from perpetual $T$-systems to perpetual free-choice systems - without using the $CP$-exhaustion? How far can one relax the assumptions of van der Aalst's theorem and still prove lucency?

\mbox{In \cite[Theor. 3, FN 2]{Aal2020}} van der Aalst mentions that he currently investigates his theorem in this direction.

\bibliographystyle{fundam}

\begin{thebibliography}{}
\providecommand{\url}[1]{\texttt{#1}}
\providecommand{\urlprefix}{URL }
\expandafter\ifx\csname urlstyle\endcsname\relax
  \providecommand{\doi}[1]{doi:\discretionary{}{}{}#1}\else
  \providecommand{\doi}{doi:\discretionary{}{}{}\begingroup
  \urlstyle{rm}\Url}\fi
\providecommand{\eprint}[2][]{\url{#2}}

\end{thebibliography}


\begin{thebibliography}{1}
\providecommand{\url}[1]{\texttt{#1}}
\providecommand{\urlprefix}{URL }
\expandafter\ifx\csname urlstyle\endcsname\relax
  \providecommand{\doi}[1]{doi:\discretionary{}{}{}#1}\else
  \providecommand{\doi}{doi:\discretionary{}{}{}\begingroup
  \urlstyle{rm}\Url}\fi
\providecommand{\eprint}[2][]{\url{#2}}

\bibitem{Aal2018}
van der Aalst, Wil M.P.
\newblock Markings in Perpetual Free-Choice Nets Are Fully Characterized by Their Enabled Transitions.
\newblock \emph{V. Khomenko and O. Roux, eds., Applications and Theory of free-choice systems 2018, Lecture Notes in Computer Science}, 2018.
\newblock \textbf{10877}:315--336.    doi:10.1007/978-3-319-91268-4\_16.

\bibitem{Aal}
van der Aalst, Wil M.P.
\newblock Erratum.
 \emph{\url{http://www.padsweb.rwth-aachen.de/wvdaalst/}\\ \url{publications/p966-erratum.pdf}}, Call 28.8.2020.

\bibitem{Aal2019}
van der Aalst, Wil M.P.
\newblock Lucent Process Models and Translucent Event Logs.
\newblock \emph{ Fundamenta Informaticae}, 2019.
\newblock \textbf{169}(1-2):151--177.  doi:10.3233/FI-2019-1842.

\bibitem{Aal2020}
van der Aalst, Wil M.P.
\newblock Markings in Perpetual Free-Choice Nets Are Fully Characterized by Their Enabled Transitions,
9 Sep 2020. arXiv:1801.04315v3 [cs.LO].

\bibitem{DE1995}
Desel, J\"org; Esparza, Javier.
\newblock Free-choice Petri Nets.
\newblock \emph{Cambridge University Press}, 1995. \\ ISBN:9780511526558.
doi:10.1017/CBO9780511526558.

\bibitem{TV1984}
Thiagarajan, Pazhamaneri S.; Voss, Klaus.
\newblock A Fresh Look at Free Choice Nets.
\newblock \emph{Information and Control}, 1984.
\newblock \textbf{62}(2):85-113.   doi:10.1016/S0019-9958(84)80052-2.

\bibitem{Wehler2010}
Wehler, Joachim.
\newblock Simplified Proof of the Blocking Theorem for Free-Choice systems.
\newblock \emph{Journal of Computer and System Sciences}, 2010.
\newblock  \textbf{76}(7):532-537.  doi:10.1016/j.jcss.2009.10.001.
\end{thebibliography}


\end{document}